\newtheorem{theorem}{Theorem}
\begin{document}
\title{Mind-to-mind heteroclinic coordination:\\ 
model of sequential episodic memory initiation}
\author{V.~S.~Afraimovich}
\affiliation{Instituto de Investigaci\'on en Comunicaci\'on \'Optica, 
Universidad Aut\'onoma de San Luis Potos\'i,
% Karakorum 1470, Lomas 4a. 78220,
San Luis Potos\'{\i}, M\'exico.}
\author{M.~A.~Zaks} 
\affiliation{Institute of  Physics, Humboldt University of Berlin, 
Berlin, Germany}
\affiliation{Research Institute for Supercomputing, 
Nizhny Novgorod State University, Nizhny Novgorod, Russia}
\author{M.~I.~Rabinovich}
\affiliation{BioCircuits Institute, University of California, 
San Diego, La Jolla, California,~USA}

\begin{abstract}
Retrieval of episodic memory is a dynamical process 
in the large scale brain networks. 
In social groups, the neural patterns, associated to specific events 
directly experienced by single members, 
are encoded, recalled and shared by all participants.
Here we construct and study the dynamical model for the formation and 
maintaining of episodic memory in small ensembles of interacting minds
We prove that  the unconventional dynamical attractor of this process 
-- the nonsmooth heteroclinic torus --
is structurally stable within the  Lotka-Volterra-like sets of equations. 
Dynamics on this torus combines absence of chaos with asymptotic instability
of every separate trajectory; its adequate quantitative characteristics 
are length-related Lyapunov exponents. 
Variation of the coupling strength between the participants results 
in different types of sequential switching between metastable states; we 
interpret them as stages in formation and modification of the episodic memory.
\end{abstract}

\keywords{Episodic memory, neuronal networks, 
winnerless competition, heteroclinic trajectories, chaotic binding}
\maketitle

\onecolumngrid

\noindent\textbf{Our ability to graft images and ideas 
into the minds of other humans 
is crucial for the existence of science, technology and literature.  
Participation of a community member in an event or group of events (episode) 
suffices to implant the memory of that episode into the minds 
of the whole community.
In our daily life we take this ability for granted,  
but only the recent advances of measurement technique have disclosed how e.g., 
a movie spectator encodes and transfers aposteriori to listeners 
the neural patterns associated with viewing specific episodes, 
and how these event-specific patterns are shared among the brains.
In the large-scale networks of the individual brain, 
a retrieval of episodic memory occurs in a way of sequential 
switching between the events, the perpetual ``winnerless competition''.
We propose and investigate the mathematical model
for the formation and maintaining of common memory in interacting minds. 
By combining rigorous proofs with numerical studies we show 
that weak coupling between the participant's minds ensures the existence 
of the so-called ``attracting heteroclinic torus''
in the phase space of the model.
Changing the coupling strength, we observe different types of dynamics 
that correspond to various forms of episodic memory.}

\section{Introduction}
\label{sec:introduction}
\hspace*{0.48\textwidth}\begin{minipage}{8.8cm}\sl
Even across different languages, our brains show similar activity, or become
"aligned" when we hear the same idea or story. This amazing neural mechanism 
allows us to transmit brain patterns, sharing memories and knowledge.\\ 
\hspace*{5.9cm} Uri Hasson (2016).
\end{minipage}

\vspace*{.01cm}

\twocolumngrid

Development of technology and science, as well as the sheer existence of oral 
and written literature owes much to the fact that personal participation in 
an event is not a necessary precondition of keeping that event in one's memory:
humans are able to mentally construct episodes when reading or listening to 
recollections of other humans. A recent study, based on 
analysis of magnetic resonance brain imaging during the performance 
of verbal communication tasks, 
traced how neural patterns associated with viewing specific scenes 
in a movie were encoded, recalled, 
and then transferred to a group of listeners 
who had not seen the movie~\cite{Zadbood_2017}.  
It disclosed that event-specific patterns, observed in the brain default mode
network, were shared across the processes of encoding, recall, 
and construction of the same episodes. 
Such studies uncover intimate correspondences between
episodic memory encoding and construction, and underscore the role of
the common language in the transmission of memory to other brains.

Communication in persistent social groups (families, friends, colleagues, 
$\ldots$) is facilitated by common episodic memories: interpersonal knowledge 
of past, shared by the group members~\cite{Bietti_2010}. 
Distributed within the group, such memories serve as a stem 
around which new layers of shareable information are accumulated. 
Notably, episodic memories are not exact replicas of the lives: 
rather, they are organized summaries of experience, encoded in the form of
sequential groups of events~\cite{Heusser_2017}.
According to recent imaging data, the brain areas responsible
for storage and retrieval of episodic memories include 
hippocampus, striatum and the prefrontal 
cortex~\cite{Sadeh_2011,Collin_2017}.

\subsection*{Low-dimensional mind dynamics}

Below, we present and study a low-dimensional model of mind-to-mind 
episodic memory interaction. 
We emphasize from the beginning that we intend 
not to model the brain itself as a system 
but to create a dynamical model for the activity of this system. 
Our ultimate goal is to describe, understand and make predictions 
of mind dynamics, obtaining, in particular, dynamical models 
of specific classes of such activities as 
cognition, creativity, and autobiographic memory. 

Recent technological progress has allowed the researchers
to observe the brain patterns with resolution and clearness 
that could be previously only dreamed of. The prominent role
is currently played by functional magnetic resonance imaging (fMRI) 
that tracks the changes associated with the blood flow through the brain.
Experimental findings indicate that cognition in the human brain, 
as well as the conscience of certain mammals:\\
a) is closer to determinism than to random processes;\\
b) bears the characteristic features of low-dimensional dynamics, and\\
c) manifests itself in the form of sequential metastable 
spatio-temporal patterns. 

We cite just a few pertinent publications:
\begin{itemize}
\item
In a recent study, Ma and Zhang investigate the temporal organization of 
%characteristic
resting-state functional connectivity (RSFC)  in awake rodents 
and humans. They report: ``We found that transitions between RSFC patterns 
were not random but followed specific sequential orders. 
Transitions between RSFC patterns exhibited high reproducibility 
and were significantly above chance'',
and conclude: ``Spontaneous brain activity is not only
nonrandom spatially, but also nonrandom temporally''~\cite{Ma_Zhang_2018}.
\item
By analyzing local field potentials from the cortices of rats under anesthesia, 
Hudson et al. find out that  ``recovery of consciousness occurs after the brain 
traverses a series of metastable intermediate activity 
configurations''~\cite{Hudson_et_al_2014}. They demonstrate that 
``recovery is confined to a low-dimensional subspace'' and conclude that  
``organization of metastable states, along with dramatic dimensionality 
reduction, significantly simplifies the task of sampling the parameter 
space''~\cite{Hudson_et_al_2014}.
%%%%%%%%%%%
\item Analysis of high temporal resolution human fMRI data from a large 
sample of  unrelated individuals in the study of 
Shine et al~\cite{Shine_Breakspear_2017} 
suggests that the  ``integrative core of brain regions  ... manipulates the 
low-dimensional architecture of the brain across an attractor landscape 
via highly conserved modulatory neurotransmitter systems''; 
reconstruction of state-space trajectories unambiguously confirms
``existence of a low-dimensional, dynamic, integrated component that recurs 
across multiple unique tasks  and demarcates a common cognitive architecture 
within the human brain''. 
The authors of the study summarize: ``Global brain states exist along 
a low dimensional manifold''~\cite{Shine_Breakspear_2017}.
\item In the study \cite{Slapsinskaite_2017} of incremental exhaustive cycling 
performed by the group of physically active adults, the participants 
were instructed to monitor bodily regions with discomfort and pain.
Tracking the evolution of pain-attention during the exercises,
the researchers disclosed  the ``dynamical phenomenon of chunking that the 
biological-cognitive system uses to manage larger sequence of information
into smaller units to facilitate information processing''; they concluded that
``the chunks operate on an heteroclinic cycle of metastable states where each 
metastable state itself is a heteroclinic cycle of basic information items''.
\item Finally, experimental studies on the formation of episodic 
memory~\cite{Baldassano_et_al_2017} show how ``cortical structures generate 
event representations during narrative perception and how these events 
are stored to and retrieved from memory. The data-driven approach allows 
to detect event boundaries as shifts between stable patterns of brain 
activity without relying on stimulus annotations
and reveals a nested hierarchy from short events in sensory regions 
to long events in high order areas (including angular gyrus and posterior 
medial cortex), which represent abstract, multimodal situation models.'' 
Below, we interpret such ``shifts'' as heteroclinic switches 
between metastable patterns.
\end{itemize}
In accordance with this convincing evidence, certain kinds of mind activity 
definitely can be (and already are)  a subject for low-dimensional dynamical 
modeling~\cite{Friston_2000,Tognoli_Kelso_2014,Cocchi_et_al_2017}.

Our modeling approach below is based on the following assumptions,
suggested by experimental data:
\begin{enumerate}
\item Sensory, semantic and emotional information is encoded, memorized, stored
and retrieved by  global brain networks. 
\item During perception, encoded patterns are similar for different humans that
share memory representations for the same real-life events~\cite{Chen_2017}. 
\item In the course of continuous perception, the brain automatically segments 
experience into discrete events~\cite{Baldassano_2017}, 
``the meaningful segments of one's  life, the coherent units 
of one's personal history''~\cite{Beal_Weiss_2013}. 
Segmented information is memorized in the form of abstract
patterns at the high level of hippocampus and cortical areas~\cite{Chen_2017}.
\item Memorized events are segmented into chunks~\cite{chunking_2014}.
Temporally organized chunks form episodes, organized into    
sequences changing with environment~\cite{Cohn_Sheehy_Ranganath_2017}. 
\item Recent studies provide evidence that within events, temporal memory
is related to temporal stability 
of brain memory patterns~\cite{Clewett_Davachi_2017}. 
Accordingly, in the phase space the event patterns should display metastability:
In the retrieval process the chunks compete
and form heteroclinic chains of sequentially switching metastable patterns
\cite{Rabinovich_2008,Rabinovich_2015,Bick_and_Rabinovich_2009}.
\end{enumerate}

These assumptions lead us to the simplified dynamical model of the mutual 
mind-to-mind interaction. We demonstrate that the attractor of the model 
in the case of two interacting subsystems (brains) for a wide range 
of parameters is the unconventional object: the two-dimensional non-smooth 
invariant torus. Peculiarity of dynamics upon it is strict absence of chaos, 
contrasted with instability (in the sense of Lyapunov) of each trajectory.
Remarkably, the proper characteristics of the instability are not 
the conventional Lyapunov exponents (average rates of instability growth 
per time unit), but the average rates of instability growth 
per unit of orbit length in the phase space. 
At larger strength of the coupling between the partners, the torus 
undergoes a breakup, and the resulting dynamical pattern indicates some kind 
of cooperative interaction, akin to synchronization in certain features, 
but different from it in the other ones.

The layout of the paper is as follows. In Sect.\ref{sect:model},
starting from general requirements to characteristics of individual
brain dynamics and to kinds of interactions between the brains,
we delineate the class of considered dynamical systems and 
reduce it to a set of coupled units, each one governed by
Lotka-Volterra-like ordinary differential equations. 
Each subsystem features the non-autonomous episodic memory recall; 
mathematically, we interpret it as the closed heteroclinic chain of episodes 
in the long term memory 
under parametric excitation by sequences that come from the partner subsystems.

The bulk of the paper is focused on the simplest case:
unidirectional mind-to-mind entrainment, ``master-slave'' dynamics. 
In Section~\ref{sect:rigorous} we show that the attractor of this system
is the two-dimensional non-smooth  invariant torus. When 
subsystems are uncoupled, this torus appears as the direct product of two
heteroclinic cycles, and, as we rigorously prove, it persists at least
under sufficiently small coupling strength. 
Every trajectory on the torus is a heteroclinic connection joining two
metastable states of equilibrium. 
Hence,  dynamics on the torus is  absolutely non-chaotic. 
Nevertheless, as numerical experiments in Sect.\ref{sect:numerics}  
force us to believe, each  trajectory in the basin 
of this attractor is Lyapunov unstable.
When, at stronger coupling, the torus breaks up, dynamics
in the slaved subsystem turns into alternation of piecewise constant
segments that follows the switches in the master subsystem.   

\section{The basic model of social cooperation}
\label{sect:model}
Dynamical cell assembly coding belongs to prevailing concepts 
in the context of information processing in the individual human brain.
In global functional brain networks these assemblies form different 
spatio-temporal modes.  
When the minds interact, specific networks are responsible 
for the performance of specific cognitive functions in the partners.  
Since the coding occurs on the population level, dynamics of the modes is
usually low-dimensional~\cite{Ma_Zhang_2018}. Low-dimensionality results from
coherent activity of many elements that form modes, and can be extracted 
from the records by application of e.g., principal 
component  analysis~\cite{Hudson_et_al_2014,Shine_Breakspear_2017}. 
We assume that $N$ different spatio-temporal patterns (brain modes) 
$P_i(\vec{r},t),\;i=1,2,\ldots,N$ 
are characterized by a discrete set of spatial coordinates $\vec{r}$. 
Spatial structure of the patterns is influenced, besides physiological factors, 
by the social environment.  
Noteworthy, $P_i(\vec{r},t)$ may have different sense, 
related to the performance of different  cognitive and behavioral tasks. 

The patterns $P_i(\vec{r},t)$ can be based on several brain subnetworks 
like perceptual, memory, and motor brain circuits, 
therefore their intrinsic dynamics can be quite complex.
In certain cases, temporal and spatial patterns of the modes can be separated:
$P_i(\vec{r},t)=Q_i (\vec{r})\,R_i(t)$ where $Q_i (\vec{r})$ describes 
the spatial organization of the $i$-th mode and $R_i(t)$ characterizes 
its temporal evolution. 
Remarkably, the amplitudes $R_i$ cannot be initiated ``from outside'': 
the mode,  absent at a particular moment of time, 
will be absent for all subsequent times.
Suppose that all $R_i(t)$ obey a kinetic equation up to the second order.
If, as  the result of the inferential process, the spatial structure of
the modes is known, then, after factorization, the basic kinetic model 
for a single brain can be written in the 
generalized Lotka-Volterra form: 
\begin{equation}
\label{LV_solo}
\dot{R}_i=R_i
 \big(\overline{\sigma}_i -R_i-\sum_{j\neq i}^N\overline{\rho}_{i j} R_j\big)
       +\varepsilon\zeta_i(R_i),\,i=1,\ldots,N.
\end{equation}
Here $\overline{\sigma}_i$ denotes the excitation rate of the $i$-th mode, 
$\{\overline{\rho}_{ij}\}$ is the cognitive inhibition matrix 
that characterizes the mutual interaction between the modes, and 
$\varepsilon$ parameterizes the environmental state-dependent
fluctuations $\zeta_i$.

Below we describe the interaction between two social partners; generalization
to larger number of participants is straightforward.
Denote the temporal patterns $R_i(t)$ for partners $X$ and $Y$ by the sets of
functions $x_i(t)$ ($i$=$1,\ldots,N_x$) 
and $y_s(t)$ ($s$=$1,\ldots,N_y$) respectively. 
In general, each mode of $X$ should be enabled to interact 
with every mode of $Y$ and vice versa.  
Then, collective dynamics is governed by the system 
\begin{eqnarray}
\label{LV_two}
\dot{x}_i&=&x_i\left(\sigma_i -x_i-\sum_{j\neq i}^{N_x}\rho_{i j} x_j\,
-q \sum_{s=1}^{N_y}\theta_{i s} y_s\right) +\varepsilon\zeta_i(x_i) \nonumber\\
\\
\dot{y}_k&=&y_k\left(\delta_k -y_k-\sum_{k\neq s}^{N_y}\xi_{k s} y_s\,
-p \sum_{s=1}^{N_x}\eta_{k s} x_s\right)
+\varepsilon\overline{\zeta}_k (y_k) \nonumber
\end{eqnarray}
where $\sigma_i$ and $\delta_k$ are the respective sets of excitation rates 
for the participants $X$ and $Y$,
$\{\rho_{ij}\}$ and $\{\xi_{ks}\}$ are their cognitive
inhibition matrices, the parameters $p$ and $q$ measure
the strength of the social interaction, and the interaction itself 
is prescribed by the matrices $\{\theta_{is}\}$ and $\{\eta_{ks}\}$. 
Finally, $\zeta_i$ and $\tilde{\zeta}_k$ are state-dependent 
fluctuations (noise) that will be specified below.

Formally, the system (\ref{LV_two}) is just the decomposition of 
\eqref{LV_solo}.
However, our setup distinguishes 
between the patterns formed by $X$ and those formed by $Y$.
Accordingly, we expect that the largest elements in the matrices 
$\rho,\,\xi,\,\theta,\,\eta$ are of the order one whereas 
the coupling coefficients $p$ and $q$ stay relatively small. 

\subsection{Configurations of social entrainment}

Complex  dynamics of the system (\ref{LV_two}) in the wide domains 
of parameter values  is able to represent 
the evolution of the sequences of events/episodes. 
It is thereby a convenient model for the analysis of mutual social influence on 
the performance of episodic memory. In different regions of its parameter space,
various attractors can be encountered; for example, many of the 64 stationary
solutions (states of equilibrium) are stable in certain parameter ranges.
We are, however, not interested in stable equilibria or in simple limit cycles:
episodic memory, as it is known from the experiments, 
is neither time-independent, nor strictly periodic. 
Hence we seek in the parameter space the domains where
all states of equilibrium are unstable nodes or saddle points. 
Presence of many invariant hyperplanes in the phase space favors formation 
of structurally (within the frame of Lotka-Volterra-like systems) 
stable heteroclinic connections between the saddle points.
In the context of the memory functioning, such connections enable
an efficient dynamical way of information coding
through robust sequential switching,
based on the winnerless competition principle.
The image of this coding in the phase space
is the stable heteroclinic channel (see Fig.~\ref{fig:het_channel} ).   
\begin{figure}[h]
%\centerline{\includegraphics[width=0.4\textwidth]{het_channel.jpg}}
%\centerline{\includegraphics[width=0.4\textwidth]{fig_1.eps}}
\centerline{\includegraphics[width=0.15\textwidth]{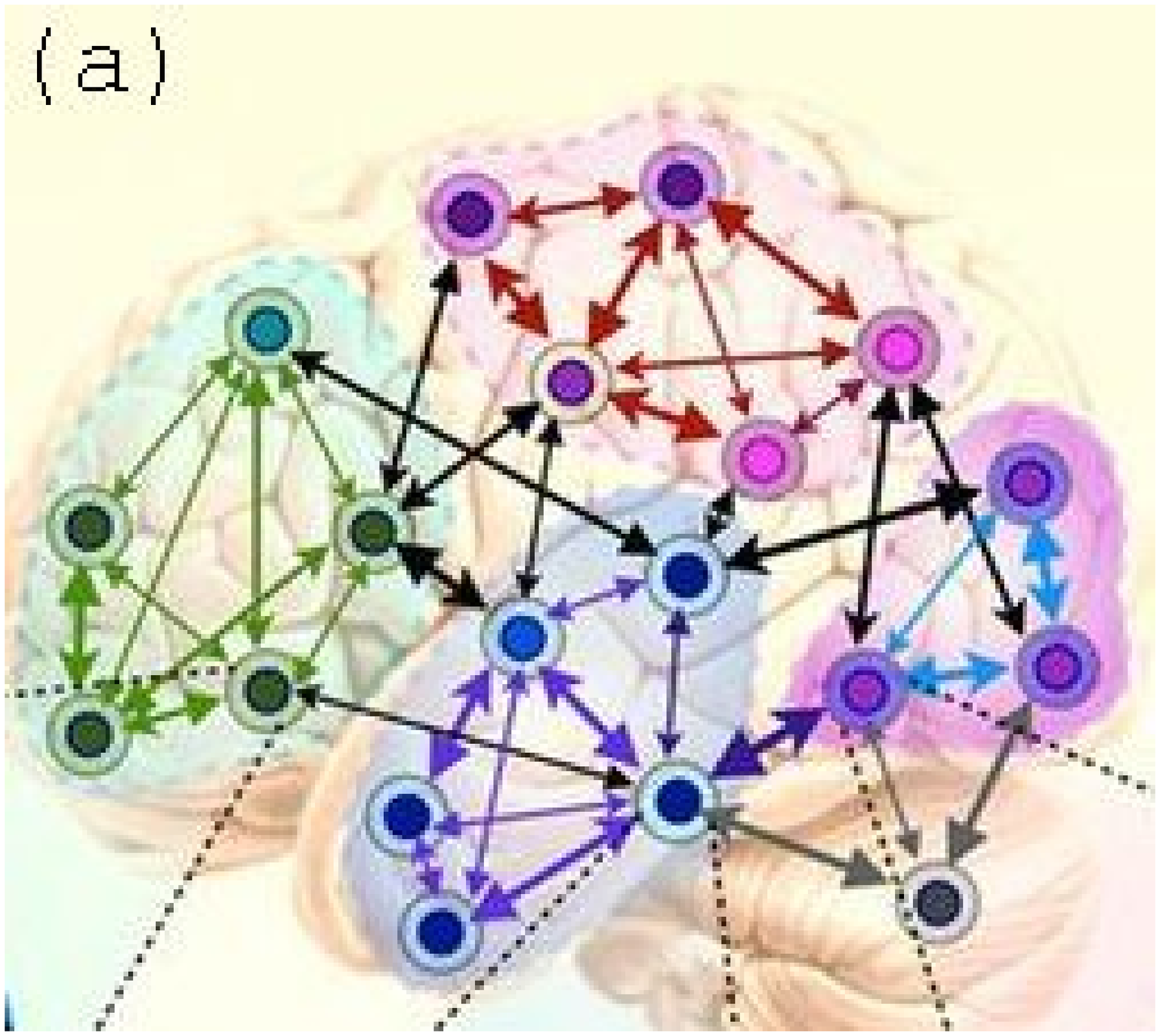}\qquad
\includegraphics[width=0.22\textwidth]{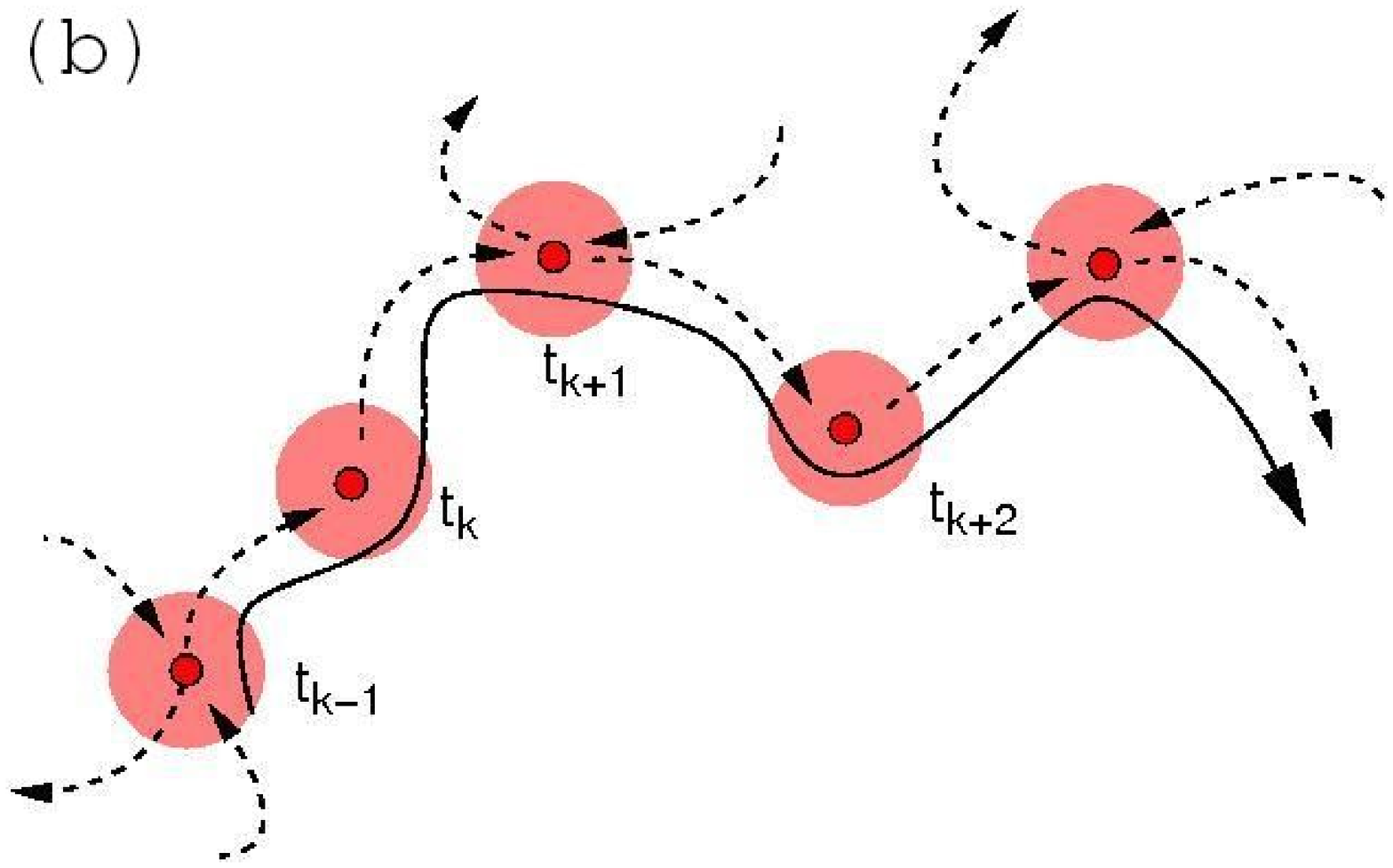}}
\caption{(a) Global brain network whose activity can be represented in the
cognitive space by robust sequential switching; 
excitation level of different network modes is shown by different
colors (b) Stable heteroclinic channel with a chain of metastable states 
(informational patterns). Dashed lines: separatrices of metastable states 
(Adapted from \cite{Varona_Rabinovich_2016}).}
\label{fig:het_channel}
\end{figure}

In our mathematical and numerical studies below, we restrict ourselves 
to the minimal configuration with $N_x$=$N_y$=3: 
operating with just three patterns for each of two partners 
delivers a revealing example of heteroclinic switching 
in the mind-to-mind dynamics. 
Cases of larger $N_x$ and $N_y$, albeit more demanding computationally, can
be treated in the similar way.

If the information exchange between the participants 
is unidirectional ($|q|\ll |p|$) -- this happens, for example, 
if $X$ does not focus her/his attention on the
visual or verbal signals of $Y$ -- the model (\ref{LV_two}) allows
for simplification, enabling the analytic investigation. 

Purely unidirectional connection acquires importance in yet another
situation, relevant for the modeling of memory-related processes. 
Suppose the brain $X$ is not the brain of some other person
but it is the brain of $Y$ in the past. Then the stated problem
turns into the question, how the episodic memory from
the past encodes memory dynamics for  the future: 
a dynamical description of the imagination process. 
Several decades ago, 
D.~Ingvar recognized: in order to be useful, 
a simulation of the future event should be encoded into memory 
so that the gained information can be retrieved at a later time 
when the simulated behavior is actually carried out;
he termed this process ``memory of the future''~\cite{Ingvar_1985} 
(for further details see~\cite{Szpunar_2013,Schacter_2017}). 
From the dynamical theory point of view, ``memory of
the future'' is a result of the inhibitory interaction of the events-modes
from the past episodic memory with modes in the present time 
(see \cite{Varona_Rabinovich_2016}). About the role of memory inhibition 
in imagination of the sequences see  \cite{Gomez-Ariza_2017}. 

This setup differs from heteroclinic harmonic entrainment, 
observed in the single three-state network 
under sinusoidal forcing~\cite{Rabinovich_et_al_2006}:
The localized in time actions of information units
are determined not only by the frequency of heteroclinic cycling
but also by the characteristics (exit times) of metastable states.

\subsection{Dynamical characteristics: length-related Lyapunov exponents.}
When characterizing dynamical regimes in the model (\ref{LV_two}), we
cannot rely on the standard tools like conventional Lyapunov exponents:
in the situation of heteroclinicity to states of equilibrium, they are of 
little help. Recall that the Lyapunov exponents
are defined for the reference trajectory in the system of order $N$ as 
\begin{equation}
\label{lyap_time}
\lambda_i = \lim_{t\to\infty}\frac{1}{t} \log\frac{\| \tilde{x}_i(t) \|}
{\| \tilde{x}_i(0) \|},\;\; i =1, \dots N,
\end{equation}
where $\tilde{x}_i (t)$ are linearly independent solutions of the linearization
near this reference trajectory, that start from $N$ appropriate perturbation 
vectors $\tilde{x}_i (0)$. In our case, we expect that the largest $\lambda_i$ 
vanish and cannot help us to measure the amount of instability 
stored in the attraction basin.

This can be explained by the following reasoning.
A conventional Lyapunov exponent characterizes the rate of perturbation growth
\textit{per unit of time}. 
For trajectories close to heteroclinicity and their perturbations tangent 
to invariant hyperplanes in the phase space of \eqref{LV_two},
the overwhelming  (asymptotically tending to 1) proportion of \textit{time} 
is spent in nearly static configurations, hence a characterization in terms 
of time units loses its merits. 
A more appropriate characteristics of weak instability in this situation 
requires a different parameterization of the trajectory: 
the rate of perturbation growth \textit{per unit of length} 
of the reference trajectory in the phase space,
\begin{equation}
\label{lyap_length}
\Lambda_i = 
\lim_{t\to\infty}\frac{1}{L(t)} \log\frac{\| \tilde{x}_i(t) \|}
{\| \tilde{x}_i(0) \|},\;\; i =1, \dots N,
\end{equation}
where $L(t)$ is the (Euclidean) length of the segment of the reference phase
trajectory between time instants 0 and $t$. This kind of characteristics 
was introduced in ~\cite{AN} where the standard (time-related) Lyapunov exponent
vanished for similar reasons whereas the length-related ones were positive. 
For the  numerical example treated below in Section~\ref{sect:numerics},
in a range of coupling strength there are two positive
length-related exponents $\Lambda_{1,2}$  
and, hence, 
$\|\tilde{x}_{1,2}(t) \| \sim \exp \big(\Lambda_{1,2}L(t)\big)$. 
Recalling that dynamics with two or more positive $\lambda_i$
is termed ``hyperchaos'', here we can speak of weak hyperchaos.

The length-related Lyapunov exponents not only quantify this kind of dynamics
but also serve as indicators %that allow one to identify the most
of essential transitions: in our context, bifurcations of the torus break-up. 
In the exemplary system treated in Section~\ref{sect:numerics},
such bifurcations occur due to the partial regain of stability by equilibria
belonging to the attractor: their formerly two-dimensional unstable manifolds 
become one-dimensional. After this event, only one length-related Lyapunov 
exponent stays positive. Accordingly, dynamics becomes ``one-dimensional'' 
but in a tricky way: time plots of observables are almost piecewise constant, 
with each plateau corresponding to the interval of activity for
one of the master variables (see details below). 

\section{Non-smooth torus: rigorous results}
\label{sect:rigorous}
Here, we introduce and study a new dynamical object: two dimensional
non-smooth invariant torus $\mathbb{T}^ 2$ that can be viewed as a mathematical
image of interactions, in the master-slave way, of two cognitive systems.

When the systems are uncoupled, this object appears as the direct 
product of two heteroclinic cycles, and, as we prove below, 
it persists at least under small rates of coupling. 
Since every trajectory on $\mathbb{T}^ 2$ is a heteroclinic connection between 
two saddle points, dynamics upon it cannot be chaotic. Nevertheless, 
as follows from numerical experiments 
in the subsequent Sect.~\ref{sect:numerics}, 
each trajectory in the basin of this attractor is Lyapunov unstable.
Thus, we deal here with a situation, quite different both from the case
of chaotic attractors and from the phenomenon of transient 
chaos where instability of trajectories is caused by the presence of 
an unstable chaotic set in the boundary of the attractors basin \cite{GOY}. 
For the first time, dynamics of this kind was reported in~\cite{AN}:  
a numerical study of interaction between two systems, 
one of them possessing a heteroclinic cycle and another one  
having a stable limit cycle~\cite{ACY}. 
Two-dimensional sets, entirely consisting of heteroclinic 
connections, were studied also in ~\cite{ARF} and ~\cite{AMY} but 
instability of trajectories in the basin of attractor 
was out of scope of those publications.

In this Section, we treat the variant of  the system (\ref{LV_two}) 
with unilateral coupling and without fluctuating terms:
\begin{eqnarray}
\label{syst_1}
\dot{x}_i& = &x_i \big(\sigma_i - x_i - \sum_{j \neq i} \rho_{ij} x_j \big)\\
\dot{y}_k &=& y_k \big(\delta_k - y_k - \sum_{s \neq k} \xi_{ks} y_s 
        - p \sum_{s = 1}^{3} \eta_{ks} x_s \big)
\label{syst_2}
\end{eqnarray}
where $\sigma_i > 0$, $\rho_{ij} > 0$, 
$\delta_k > 0$, $\xi_{ks} > 0$, $\eta_{ks} \geq 0$, \  $i,j,k,s \in \{1,2,3\}$.

\subsection{The uncoupled system}
Our analysis refers to  small values of the coupling strength $p$. 
We begin with the decoupled case
\begin{eqnarray}
\label{3}
\dot{x}_i &=& x_i (\sigma_i - x_i - \sum_{j \neq i} \rho_{ij} x_j), 
\hspace{.6cm} i,j = 1,2,3\\
\label{4}
\dot{y}_k &=& y_k (\delta_k - y_k - \sum_{s \neq k} \xi_{ks} y_s), 
\hspace{.6cm} k,s = 1,2,3.
\end{eqnarray}
First, we impose conditions under which subsystem \eqref{3} has 
a heteroclinic cycle~\cite{AZR}. 
This system has altogether 8 states of equilibrium.
Of these, three states lie on the coordinate axes. These are  
$O_1 = (\sigma_1,0,0)$, $O_2 = (0, \sigma_2,0)$, $O_3 = (0,0, \sigma_3)$ 
with eigenvalues equal to
\begin{eqnarray}
\sigma_2-\rho_{21} \sigma_1,\; &\sigma_3-\rho_{31}\sigma_1,
& -\sigma_1 \hspace{0.5cm} \text{at} \hspace{0.2cm} O_1,\nonumber\\
\sigma_3-\rho_{32} \sigma_2,\; &\sigma_1-\rho_{12} \sigma_2,\,
&-\sigma_2 \hspace{0.5cm} \text{at} \hspace{0.2cm} O_2,\hspace{0.5cm} \text{and}
\nonumber\\
\sigma_1-\rho_{13} \sigma_3,\;&\sigma_2-\rho_{23} \sigma_3,
&-\sigma_3 \hspace{0.5cm} \text{at} \hspace{0.2cm} O_3.\nonumber
\end{eqnarray}
Under the conditions
\begin{eqnarray}
\label{5}
\sigma_2-\rho_{21} \sigma_1>0, &\;\;& \sigma_3-\rho_{31}\sigma_1 <0,\nonumber\\
\label{6}
\sigma_3-\rho_{32} \sigma_2>0, & \;\;&\sigma_1-\rho_{12}\sigma_2 <0,\\
\label{7}
\sigma_1-\rho_{13} \sigma_3>0, &\;\;&\sigma_2-\rho_{23}\sigma_3 <0.\nonumber
\end{eqnarray}
every $O_i$ ($i$=1,2,3) has the one-dimensional 
unstable and the two-dimensional stable manifolds. 

Moreover, if
$\rho_{21}\rho_{12}\neq 1,\;\rho_{32}\rho_{23}\neq 1$, 
and $\rho_{13}\rho_{31}\neq 1,$
the unstable manifold of $O_i$ contains a heteroclinic trajectory
$\Gamma_{i,(i\,\rm{mod}\, 3)+1}$ joining $O_i$ and $O_{(i\,\rm{mod}\, 3)+1}$. 
Under a combination of all these
conditions, the system \eqref{3} has a heteroclinic cycle 
$$\Gamma = \cup_{i} O_i \cup_{i} \Gamma_{(i\,\rm{mod}\, 3)+1}$$

For the sake of definiteness, we assume that the leading direction on the 
stable manifold of $O_i$ is different from the coordinate axis, i.e.,
\begin{eqnarray}
\label{9}
- \sigma_1 & < & \sigma_3 - \rho_{31} \sigma_1,\nonumber\\ 
- \sigma_2 & < & \sigma_1 - \rho_{12} \sigma_2,\\
- \sigma_3 & < & \sigma_2 - \rho_{23} \sigma_3.\nonumber  
\end{eqnarray}
Then, the heteroclinic cycle has a shape sketched in 
Fig.~\ref{fig:cycle_scheme}a. 

1\begin{figure}[h]
\centerline{\includegraphics[width=0.5\textwidth]{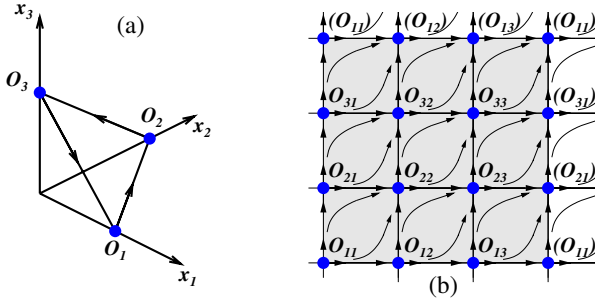}}
\caption{(a) The heteroclinic cycle $\Gamma$;
(b) Unfolding of the torus $T_0$ on the plane: sketch of the vector field.}
\label{fig:cycle_scheme}
\end{figure}
Finally, we impose stability conditions 
\begin{eqnarray}
\label{10}
-\frac{(\sigma_3-\rho_{31}\sigma_1)}{\sigma_2-\rho_{21}\sigma_1}&>& 1,\nonumber\\
-\frac{(\sigma_1-\rho_{32}\sigma_2)}{\sigma_3-\rho_{32}\sigma_2}&>&1,\\ 
-\frac{(\sigma_2-\rho_{23}\sigma_3)}{\sigma_1-\rho_{13}\sigma_3}&>&1,\nonumber
\end{eqnarray}
under which $\Gamma$ is attracting.

Similarly, for the subsystem \eqref{4} we consider three equilibrium points
$\tilde{O}_1 = (\delta_1,0,0)$, $\tilde{O}_2 = (0,\delta_2,0)$, 
$\tilde{O}_3 = (0,0,\delta_3)$ and impose:
\begin{enumerate}
\item[(i).] Conditions for the existence of one-dimensional unstable 
and two-dimensional stable manifolds at $\tilde{O}_i$:
\begin{eqnarray}
\label{11}
\delta_2 - \xi_{21}\delta_1 > 0, &\quad\delta_3 - \xi_{31}\delta_1 < 0,\nonumber\\
\delta_3 - \xi_{32}\delta_2 > 0, &\quad\delta_1 - \xi_{12}\delta_2 < 0,\\ 
\delta_1 - \xi_{13}\delta_3 > 0, &\quad\delta_2 - \xi_{23}\delta_3 < 0.\nonumber
\end{eqnarray}
\item[(ii).] Conditions for leading directions:\\[-5ex]
\begin{eqnarray}
\label{12}
-\delta_1 & <& \delta_3 - \xi_{31}\delta_1,\nonumber\\ 
-\delta_2 & <& \delta_1 -\xi_{12}\delta_2 ,\\ 
-\delta_3 &<&\delta_2 - \xi_{23}\delta_3.\nonumber 
\end{eqnarray}
\item[(iii).] Conditions for the existence of heteroclinic trajectories:
$\xi_{21} \xi_{12} \neq 1, \quad \xi_{32}\xi_{23}\neq 1, 
                         \quad \xi_{13} \xi_{31} \neq 1.$
\item[(iv).] Conditions of stability:
\begin{eqnarray}
\label{14}
-\frac{(\delta_3 - \xi_{31}\delta_1)}{\delta_2 - \xi_{21}\delta_1} > 1,
 &\displaystyle -\frac{(\delta_1 - \xi_{32}\delta_2)}
 {\delta_3 - \xi_{32}\delta_2} > 1,& 
-\frac{(\delta_2 - \xi_{23}\delta_3)}{\delta_1 - \xi_{13}\delta_3} > 1.\nonumber
\end{eqnarray}
\end{enumerate}
Under these provisions, subsystem \eqref{4} has an attractor
\begin{equation*}
\tilde{\Gamma}=\cup_{i = 1}^{3}\tilde{O}_i \cup_{i}
      \tilde{\Gamma}_{(i\,\rm{mod}\, 3)+1},
\end{equation*}
where $\tilde{\Gamma}_{(i\,\rm{mod}\, 3)+1}$ is the heteroclinic trajectory 
joining $\tilde{O}_i$ and $\tilde{O}_{(i\,\rm{mod}\, 3)+1}$. The heteroclinic 
cycle $\tilde{\Gamma}$ looks analogously to $\Gamma$ 
in Fig.~\ref{fig:cycle_scheme}a.

Hence, it follows that the system (\ref{3},\ref{4}) has an invariant set
$T_0$ that is the direct product of $\Gamma$ and $\tilde{\Gamma}$: 
$T_0 = \Gamma \times \tilde{\Gamma}$. Since both $\Gamma$ and $\tilde{\Gamma}$ 
are homeomorphic to the circle, $T_0$ is homeomorphic 
to the  two-dimensional torus $\mathbb{T}^2$. 
The equilibrium points belonging to $T_0$ are 
$O_{ij} = O_i \times \tilde{O}_j$; the eigenvalues of the linearized 
at these points system (\ref{3},\ref{4}) are summarized in the 
Table~\ref{table:eigenvalues}.
\begin{table}[htbp]
\begin{tabular}{|l|c|}
\hline
Saddle & Eigenvalues\\
\hline \hline
$O_{11}$ & $- \sigma_1$, $\sigma_2 - \rho_{21} \sigma_1$, 
   $\sigma_3 - \rho_{31} \sigma_1$, $-\delta_1$, 
   $\delta_2 - \xi_{21}\delta_1$, $\delta_3 - \xi_{31}\delta_1$ \\\hline
$O_{12}$ & $- \sigma_1$, $\sigma_2 - \rho_{21} \sigma_1$, 
   $\sigma_3 - \rho_{31} \sigma_1$, $-\delta_2$, 
   $\delta_3 - \xi_{32}\delta_2$, $\delta_1 - \xi_{12}\delta_2$ \\ \hline
$O_{13}$ & $- \sigma_1$, $\sigma_2 - \rho_{21} \sigma_1$, 
   $\sigma_3 - \rho_{31} \sigma_1$, $-\delta_3$, 
   $\delta_1 - \xi_{13}\delta_3$, $\delta_2 - \xi_{23}\delta_3$ \\ \hline
$O_{21}$ & $- \sigma_2$, $\sigma_3 - \rho_{32} \sigma_2$, 
   $\sigma_1 - \rho_{31} \sigma_1$, $-\delta_1$, 
   $\delta_2 - \xi_{21}\delta_1$, $\delta_3 - \xi_{31}\delta_1$ \\ \hline
$O_{22}$ & $- \sigma_2$, $\sigma_3 - \rho_{32} \sigma_2$, 
   $\sigma_1 - \rho_{12} \sigma_2$, $-\delta_2$, 
   $\delta_3 - \xi_{32}\delta_2$, $\delta_1 - \xi_{12}\delta_2$ \\ \hline
$O_{23}$ & $- \sigma_2$, $\sigma_3 - \rho_{32} \sigma_2$, 
   $\sigma_1 - \rho_{12} \sigma_2$, $-\delta_3$, 
   $\delta_1 - \xi_{13}\delta_3$, $\delta_2 - \xi_{23}\delta_3$ \\ \hline
$O_{31}$ & $- \sigma_3$, $\sigma_1 - \rho_{13} \sigma_3$, 
   $\sigma_1 - \rho_{23} \sigma_3$, $-\delta_1$, 
   $\delta_2 - \xi_{21}\delta_1$, $\delta_3 - \xi_{31}\delta_1$ \\ \hline
$O_{32}$ & $- \sigma_3$, $\sigma_1 - \rho_{13} \sigma_3$, 
   $\sigma_2 - \rho_{23} \sigma_3$, $-\delta_2$, 
   $\delta_3 - \xi_{32}\delta_2$, $\delta_1 - \xi_{12}\delta_2$ \\ \hline
$O_{33}$ & $- \sigma_3$, $\sigma_1 - \rho_{13} \sigma_3$, 
   $\sigma_2 - \rho_{23} \sigma_3$, $-\delta_3$, 
   $\delta_1 - \xi_{13}\delta_3$, $\delta_2 - \xi_{23}\delta_3$ \\ \hline
\end{tabular}
\caption{Eigenvalues of the linearized system  (\ref{3},\ref{4}).}
\label{table:eigenvalues}
\end{table}

The assumed conditions imply that each of the points $O_{ij}$ has
the two-dimensional unstable manifold and the four-dimensional stable manifold
(below we denote these manifolds by, respectively, $W^u$ and $W^s$).
Moreover,  some of $O_{ij}$ are joined by heteroclinic trajectories.
To list them we introduce the following notation: let $H(A \rightarrow B)$ be a
heteroclinic trajectory joining the equilibrium points A and B. For heteroclinic
cycles $\Gamma$ and $\tilde{\Gamma}$ we have the heteroclinic
trajectories: $H(O_1 \rightarrow O_2) =: \Gamma_{12}$, $H(O_2 \rightarrow O_3)
=: \Gamma_{23}$, $H(O_3 \rightarrow O_1) =: \Gamma_{31}$, $H(\tilde{O}_1
\rightarrow \tilde{O}_2) =: \tilde{\Gamma}_{12}$, 
$H(\tilde{O}_2 \rightarrow \tilde{O}_3) =: \tilde{\Gamma}_{23}$, 
and $H(\tilde{O}_3 \rightarrow \tilde{O}_1) =:\tilde{\Gamma}_{31}$. 
Heteroclinic trajectories can be listed in the way presented 
in the Table~\ref{table:heteroclinics}.

%\onecolumngrid
%\begin{widetext} 
\begin{table}[htbp]
\tiny
\begin{tabular}{|l|l|l|l|l|}
\hline \hline
$O_1$ & $H(O_{11}\to O_{12}) = O_1\times\tilde{\Gamma}_{12}$ & 
      $H(O_{12}\to O_{13}) = O_1\times\tilde{\Gamma}_{23}$ & 
      $H(O_{13}\to O_{11}) = O_1\times\tilde{\Gamma}_{31}$\\ \hline
$O_2$ & $H(O_{21}\to O_{22}) = O_2\times\tilde{\Gamma}_{12}$ & 
      $H(O_{22}\to O_{23}) = O_2\times\tilde{\Gamma}_{23}$ & 
      $H(O_{23}\to O_{21}) = O_2\times\tilde{\Gamma}_{31}$\\ \hline
$O_3$ & $H(O_{31}\to O_{32}) = O_3\times\tilde{\Gamma}_{12}$ & 
      $H(O_{32}\to O_{33}) = O_3\times\tilde{\Gamma}_{23}$ & 
      $H(O_{33}\to O_{31}) = O_3\times\tilde{\Gamma}_{31}$\\ \hline
$\tilde{O}_1$ & $H(O_{11}\to O_{21}) = \Gamma_{12}\times\tilde{O}_1$
    & $H(O_{21}\to O_{31}) = \Gamma_{23}\times\tilde{O}_1$ 
    & $H(O_{31}\to O_{11}) = \Gamma_{31}\times\tilde{O}_1$\\ \hline
$\tilde{O}_2$ & $H(O_{12}\to O_{22}) = \Gamma_{12}\times\tilde{O}_2$ 
    & $H(O_{22}\to O_{32}) = \Gamma_{23}\times\tilde{O}_2$ 
    & $H(O_{32}\to O_{12}) = \Gamma_{31}\times\tilde{O}_2$\\ \hline
$\tilde{O}_3$ & $H(O_{13}\to O_{23}) = \Gamma_{12}\times\tilde{O}_3$ 
    & $H(O_{23}\to O_{33}) = \Gamma_{23}\times\tilde{O}_3$ 
    & $H(O_{33}\to O_{13}) = \Gamma_{31}\times\tilde{O}_3$\\ \hline
\end{tabular}
\caption{Heteroclinic trajectories of basic heteroclinic network of $T_0$.}
\label{table:heteroclinics}
\end{table}
%\end{widetext}
%\twocolumngrid
 
For convenience, we place in the left column of this Table the
equilibrium points that enter the corresponding direct products. These 18
heteroclinic trajectories form a basic heteroclinic network: 
see Fig.~\ref{fig:torus_scheme}.
\begin{figure}[h]
\centerline{\includegraphics[width=0.35\textwidth]{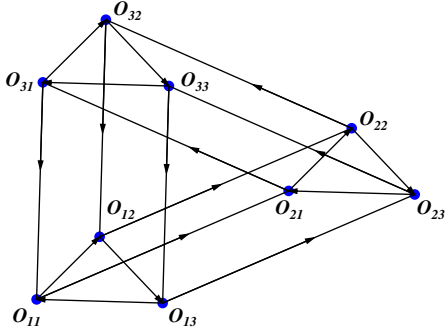}}
\caption{Heteroclinic network $\Gamma_0$ on the torus $T_0$.}
\label{fig:torus_scheme}
\end{figure}
Let us now construct a two-dimensional invariant surface for which $\Gamma_0$
plays the role of "skeleton". For that, we show that each "rectangle" in
$\Gamma_0$ serves as a boundary of a two-dimensional invariant surface. 
Consider, e.g., the rectangle formed by the heteroclinic
trajectories $H(O_{13} \rightarrow O_{11})$, $H(O_{31} \rightarrow O_{11})$,
$H(O_{33} \rightarrow O_{31})$, $H(O_{33} \rightarrow O_{13})$ and the points
$O_{33}$, $O_{31}$, $O_{13}$, $O_{11}$. All
these heteroclinic trajectories and saddle points belong to the
four-dimensional invariant plane $x_2 = y_2 = O$ which we denote by
$\mathbb{R}^{4}$.

Recalling that $H(O_{13} \rightarrow O_{11}) = O_1 \times \tilde{\Gamma}_{31}$,
$H(O_{33} \rightarrow O_{31}) = O_3 \times \tilde{\Gamma}_{31}$ and $H(O_{31}
\rightarrow O_{11}) = \Gamma_{31} \times \tilde{O}_1$, $H(O_{33} \rightarrow
O_{13}) = \Gamma_{31} \times \tilde{O}_3$, we naturally consider the
two-dimensional surface $\Gamma_{31} \times \tilde{\Gamma}_{31} = \tilde{R}_0$
that possesses the following properties:
\begin{enumerate}
\item[(i).] It is invariant by definition.
\item[(ii).] $\tilde{R}_0 \subset \mathbb{R}^{4}$, by definition.
\item[(iii).] $\tilde{R}_0 \subset {W^u}_{O_{33}}$. Indeed, a point in
$\tilde{R}_0$ is the product of two points, say P $\in \Gamma_{31}$ and $Q \in
\tilde{\Gamma}_{31}$. As time goes to $-\infty$, the representative point on
$\Gamma_{31}$ tends to $O_3$ and that on $\tilde{\Gamma}_3$ tends to
$\tilde{O}_3$, so the representative point the trajectory of the full system
going through $P \times Q$ tends to $O_{33}$. 

\item[(iv).] $\tilde{R}_0 \subset {W^s}_{O_{11}}$. 
The proof is the same as in (iii).  Thereby, 
$\tilde{R}_0$ is a collection of heteroclinic connections
joining $O_{33}$ and $O_{11}$, see Fig~\ref{fig:rectangle}.
\end{enumerate}
\begin{figure}[h]
\centerline{\includegraphics[width=0.2\textwidth]{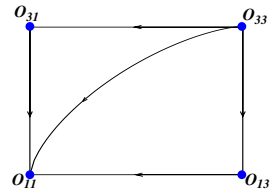}}
\caption{The rectangle $\tilde{R}_0$.}
\label{fig:rectangle}
\end{figure}
In the same way we prove that there are other 8 rectangles with boundaries
consisting of heteroclinic trajectories from the basic network, 
see Fig~\ref{fig:torus_scheme}.
They form a two-dimensional surface, say, $T_0$, homeomorphic to the 
two-dimensional torus. In fact, $T_0$ is the direct product 
of heteroclinic cycles in the systems \eqref{3} and \eqref{4}. 

By construction, the complete set of trajectories on $T_0$ consists of 
nine metastable equilibria $O_{ij}$ and heteroclinic connections 
joining them pairwise.
Since in the full phase space of system (\ref{3},\ref{4}) all $O_{ij}$
are structurally stable saddles, their presence on the toroidal surface 
(in absence of compensating nodes or foci) may seem to violate 
the index theorem. However, reduction to $T_0$
involves ``folding'' along the separatrices of the saddles. As seen in
the schematic unfolding of $T_0$ on the two-dimensional plane 
sketched in Fig.~\ref{fig:cycle_scheme}b, the procedure of folding
turns steady states into compound equlibria: 
in the adjacent segment of the plane, each of them is a source in one quadrant, 
a sink in another one and a saddle point in two remaining quadrants, so that 
the resulting Poincar{\'e} index of every steady state is identically zero.
Accordingly, the total index of $T_0$ vanishes as well, as required
for every two-dimensional toroidal surface.

Now we formulate the sufficient conditions under which $T_0$ is an attractor.
For that, we remind the notion of the saddle index~\cite{Sh}.
If $O$ is a  hyperbolic saddle point with Jacobian
eigenvalues $\lambda_1,\dots, \lambda_m$, $\gamma_1, \dots, \gamma_n$ 
such that $\operatorname{Re}{\lambda_i} < 0$, 
$\operatorname{Re}{\gamma_j} >0$, then the number
\begin{equation*}
\nu = -\frac{\max_{i} \operatorname{Re} \lambda_i}{\max_{j} 
      \operatorname{Re} \gamma_j}
\end{equation*}
is called the saddle index of $O$. If $\nu > 1$, the point $O$ is called the
dissipative saddle. For example, for the point $O_{33}$ the saddle index equals
\begin{equation*}
\nu = - \frac{\max \{\sigma_2 - \rho_{23} \sigma_3,\delta_2 - \xi_{23}
\delta_3\}}{\max \{\sigma_1 - \rho_{13} \sigma_3,\delta_1 - \xi_{13}
\delta_3\}}
\end{equation*}
See Table~\ref{table:eigenvalues} and the inequalities \eqref{5} - \eqref{12}.

\begin{theorem}
\label{theorem_1}
If all saddle points in $\Gamma_0$ are dissipative, then $T_0$ is an attractor.
\end{theorem}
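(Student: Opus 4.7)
The plan is to exploit the direct-product structure of the decoupled system \eqref{3},\eqref{4}: since the $x$- and $y$-variables evolve independently, the set $T_{0}=\Gamma\times\tilde{\Gamma}$ is an attractor of the full six-dimensional flow if and only if $\Gamma$ is an attractor of \eqref{3} in its positive octant and $\tilde{\Gamma}$ is an attractor of \eqref{4} in its positive octant. In this way the six-dimensional claim decomposes into two essentially identical three-dimensional claims.

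First I would translate the dissipativity hypothesis into a statement about the planar cycles. Denote by $\alpha_{i},\beta_{i}$ the leading contraction and leading expansion at $O_{i}$ in subsystem \eqref{3}, and analogously $\tilde{\alpha}_{j},\tilde{\beta}_{j}$ at $\tilde{O}_{j}$ in \eqref{4}; the leading-direction conditions \eqref{9} and \eqref{12} guarantee that these are indeed leading. Because the spectrum at $O_{ij}$ is the disjoint union of the spectra at $O_{i}$ and $\tilde{O}_{j}$,
\begin{equation*}
\nu(O_{ij})\;=\;\frac{\min(\alpha_{i},\tilde{\alpha}_{j})}{\max(\beta_{i},\tilde{\beta}_{j})}\;\le\;\min\!\left(\frac{\alpha_{i}}{\beta_{i}},\;\frac{\tilde{\alpha}_{j}}{\tilde{\beta}_{j}}\right),
\end{equation*}
so $\nu(O_{ij})>1$ for every pair $(i,j)$ forces each planar saddle index $\alpha_{i}/\beta_{i}$ and $\tilde{\alpha}_{j}/\tilde{\beta}_{j}$ to exceed one, which is precisely the hypothesis of the classical attractivity criterion for a three-dimensional Lotka-Volterra heteroclinic cycle.

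Next I would invoke (or, as needed, re-derive) that criterion for $\Gamma$ and $\tilde{\Gamma}$ separately. At each $O_{i}$ I choose small entry and exit cross-sections $\Sigma_{i}^{\mathrm{in}},\Sigma_{i}^{\mathrm{out}}$ transverse to the incoming and outgoing separatrices of $\Gamma$; linearize the flow in a Hartman-Grobman neighborhood of $O_{i}$; compute the local map $\Sigma_{i}^{\mathrm{in}}\to\Sigma_{i}^{\mathrm{out}}$ in a polar-like parametrization, obtaining a contraction of the transverse coordinate of order $d^{\alpha_{i}/\beta_{i}}$; compose with the smooth global maps along the three heteroclinic connections; and close the loop to obtain a strict contraction on a wedge neighborhood of $\Gamma$, since each factor $\alpha_{i}/\beta_{i}$ exceeds one. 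The same argument delivers asymptotic stability of $\tilde{\Gamma}$.

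Finally I would lift this pair of attractivities back to the six-dimensional system: in a product tubular neighborhood $U^{x}\times U^{y}$ of $T_{0}$ the two flows act independently on the factors and drive the projections into $\Gamma$ and $\tilde{\Gamma}$ respectively, so the $\omega$-limit of every trajectory starting inside is contained in $\Gamma\times\tilde{\Gamma}=T_{0}$; together with openness of the basin this is the required attractor property. The main technical obstacle lies in the middle step: producing local coordinates near each planar saddle that yield a Poincar\'e return map uniformly contracting on a whole neighborhood of the cycle, and not only along a single separatrix, in spite of the fact that the stable manifold of each $O_{i}$ is two-dimensional and carries the weak eigendirection by which trajectories arrive. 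I would handle this either by citing the attractivity criterion of Krupa and Melbourne for simple heteroclinic cycles or by reproducing its polar-coordinate estimates; in either case the only ingredient actually used is the dissipativity inequality verified in the second paragraph.
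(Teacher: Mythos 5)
Your proposal is correct, but it reaches the conclusion by a genuinely different route than the paper. The paper argues directly in the six-dimensional phase space: starting at distance $\epsilon$ from the stable manifold of some $O_{ij}$, it uses the dissipativity of that saddle to show the orbit exits its neighborhood at distance of order $\epsilon^{\nu_0}$, $\nu_0>1$, from the unstable manifold (which lies in $T_0$), then follows the finite-time global passage to the next saddle of $\Gamma_0$, picking up only a bounded constant, and iterates to get geometric decay of the distance to $T_0$. You instead exploit the product structure $T_0=\Gamma\times\tilde{\Gamma}$ of the decoupled system, reduce the claim to asymptotic stability of the two planar heteroclinic cycles of \eqref{3} and \eqref{4} separately, and observe --- correctly, via $\nu(O_{ij})=\min(\alpha_i,\tilde{\alpha}_j)/\max(\beta_i,\tilde{\beta}_j)\le\min\bigl(\alpha_i/\beta_i,\,\tilde{\alpha}_j/\tilde{\beta}_j\bigr)$ --- that dissipativity of every $O_{ij}$ forces every three-dimensional saddle index to exceed one, which is the standard sufficient condition for each planar cycle to attract; the lift back through the product metric is then immediate. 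Both arguments are sound, and the local-map estimate you defer to Krupa--Melbourne is the same ingredient the paper imports from \cite{Sh,AH}, so neither proof is more elementary at that point. Your factorization buys something the paper does not state: it shows the hypothesis is stronger than needed, since the planar stability conditions already imposed in the setup (the conditions under which $\Gamma$ and $\tilde{\Gamma}$ are each attracting) suffice for the conclusion, whereas two dissipative planar saddles need not combine into a dissipative $O_{ij}$. What it does not buy is robustness: the product decomposition is tied to $p=0$, while the paper's direct estimate on the saddles $O_{ij}$ is precisely the argument that survives perturbation to $p\neq 0$ and underlies Theorems~\ref{theorem3} and~\ref{theorem4}.
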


\begin{proof}
It suffices to show that the representative point on the trajectory passing
through any initial point in a neighborhood of $T_0$ tends to $T_0$ as $t
\rightarrow \infty$. Without a loss of generality we start with an initial
point $q_0$ that is close to one of the equilibria in $\Gamma_0$; 
let this equilibrium be, e.g., $O_{33}$. Let
$\epsilon = {\rm dist}(q_0, {W_{O}^{s}}_{33})$. It follows 
(see~\cite{Sh,AH}) that the orbit passing through $q_0$ 
leaves a neighborhood of $O_{33}$ at a point  $q_1$ such that 
${\rm dist}(q_1, {W_{O}^{u}}_{33}) < \epsilon^{\nu_0}$ where 
$1 < \nu_0 < \nu_{33}$ ($\nu_{33}$ denotes the saddle index of $O_{33}$). 
Then the orbit follows a trajectory on
${W_{O}^{u}}_{33}$ and comes, after a finite time, to a point $q_2$
in a small neighborhood of the equilibrium $\tilde{O}$ that is
either $O_{33}$ or $O_{13}$ or $O_{11}$, so that 
${\rm dist}(q_3, W^{s}(\tilde{O})) < C \epsilon^{\nu_0}$, $C = const$. 
If $\epsilon$ has been
small enough, then $C \epsilon^{\nu_0} < \epsilon/2$. 
Then we reproduce the previous consideration for the point $\tilde{O}$, 
replacing $\epsilon$ by $\epsilon/2$. 
Repeating this procedure again and again, 
we ensure that ${\rm dist}(q_{2k+1}, T_0) < {\epsilon}/{2^k} $ 
where $q_{2k+1}$ is
the representative point after the time in which the trajectory intersects $k$
successive neighborhoods. Remark that we should choose constant $C$ only
finitely many times independently of $k$ since the passage time from one
neighborhood of an equilibrium  to another one in the same rectangle 
is bounded from above. Thus, the representative point tends to $T_0$ 
as $t \rightarrow +\infty$.
\end{proof}
Table~\ref{table:indices} shows the saddle indices of
all saddles in $\Gamma_0$.
\begin{table}[htbp]
\begin{center}
\begin{tabular}{|l|c|}
\hline
Saddle & Saddle index $\nu$\\
\hline \hline
$O_{11}$ & $ \frac{- \max \{\sigma_3 - \rho_{31} \sigma_1,\delta_3 - \xi_{31}\delta_1 \}}{ \max \{\sigma_2 - \rho_{21} \sigma_1,\delta_2 - \xi_{21}\delta_1\}}$  \\\hline
$O_{12}$ & $ \frac{-\max \{\sigma_3 - \rho_{31} \sigma_1,\delta_1 - \xi_{12}\delta_2\}}{\max \{\sigma_2 - \rho_{21} \sigma_1,\delta_3 - \xi_{32}\delta_2\}}$ \\\hline
$O_{13}$ &  $\frac{-\max \{\sigma_3 - \rho_{31} \sigma_1,\delta_2 - \xi_{23}\delta_3\}}{\max \{\sigma_2 - \rho_{21} \sigma_1,\delta_1 - \xi_{13}\delta_3\}}$ \\\hline
$O_{21}$ &  $\frac{-\max \{\sigma_1 - \rho_{12} \sigma_2,\delta_3 - \xi_{31}\delta_1\}}{\max \{\sigma_3 - \rho_{32} \sigma_2,\delta_2 - \xi_{21}\delta_1\}}$  \\\hline
$O_{22}$ &  $\frac{-\max \{\sigma_1 - \rho_{12} \sigma_2,\delta_1 - \xi_{12}\delta_2\}}{\max \{\sigma_3 - \rho_{32} \sigma_2,\delta_3 - \xi_{32}\delta_3\}}$ \\\hline
$O_{23}$ &  $\frac{-\max \{\sigma_1 - \rho_{12} \sigma_2,\delta_2 - \xi_{23}\delta_3\}}{\max \{\sigma_3 - \rho_{32} \sigma_2,\delta_1 - \xi_{13}\delta_3\}}$ \\\hline
$O_{31}$ & $\frac{-\max \{\sigma_2 - \rho_{23} \sigma_3,\delta_3 - \xi_{31}\delta_1\}}{\max \{\sigma_1 - \rho_{13} \sigma_3,\delta_2 - \xi_{21}\delta_1\}}$ \\\hline
$O_{32}$ &  $\frac{-\max \{\sigma_3 - \rho_{23} \sigma_3,\delta_1 - \xi_{12}\delta_2\}}{\max \{\sigma_1 - \rho_{13} \sigma_3,\delta_3 - \xi_{32}\delta_2\}}$ \\\hline
\end{tabular}
\caption{Saddle indices of the saddle equilibrium points.}
\label{table:indices}
\end{center}
\end{table}

\subsection{Persistence of $\Gamma_0$ and $T_0$ for small values of $|p|$}
Now we introduce in Eq.(\ref{4}) the weak non-zero coupling $p$ 
from the subsystem $X$ to the subsystem $Y$. 
\begin{enumerate}
\item[a).] \textit{Persistence of $\Gamma_0$}\\
To show the persistence of the heteroclinic network $\Gamma_0$ we consider
all heteroclinic orbits belonging to it. Without loss of generality we
choose the rectangle $\tilde{R}_0$; the proof for other rectangles is similar.

\underline{Persistence of $H(O_{33} \rightarrow O_{31}) = O_1 \times \tilde{\Gamma}_{31}$.}\\
This trajectory belongs to the three dimensional plane $x_2 = x_3 = 0$, $y_2 =
0$. Denote it by $\mathbb{R}^{3}_1$. This plane is invariant for $\varepsilon =
0$ both for the system (\ref{3},\ref{4}) and (\ref{syst_1},\ref{syst_2}).
Inside $\mathbb{R}^{3}_1$ the point $O_{33} = (\sigma_1, 0,\delta_3)$ is the
saddle equilibrium point for the system (\ref{3},\ref{4}) with eigenvalues 
$-\sigma_1$, $\delta_1 - \xi_{13}\delta_3$, $\delta_2 - \xi_{23}\delta_3$, see
\eqref{11}, i.e., with one-dimensional unstable manifold whereas
the point $O_{31} = (\sigma_1, 0,\delta_1)$ is the stable node with eigenvalues
$-\sigma_1$, $-\delta_1$, $\delta_3 - \xi_{31}\delta_1$, 
see Table~\ref{table:eigenvalues}.  Inside $\mathbb{R}^{3}_1$, for small values of $|p|$ 
the saddle (node) equilibrium point stays the saddle (node). 
Denote them by $O_{33}(p)$ and $O_{31}(p)$. The smooth
dependence of the unstable manifold on parameters and continuous dependence of
solutions of the ODE on parameters imply that for small values of $|p|$ there exists a
heteroclinic orbit joining $O_{33}(p)$ and $O_{31}(p)$. Persistence of
this heteroclinic trajectory is a structurally stable feature.

\underline{Persistence of $H(O_{33} \rightarrow O_{13}) = \Gamma_{31} \times
\tilde{O}_3$.}\\
This trajectory belongs to the three-dimensional invariant plane $x_2 = 0$, $y_1
= y_2 = 0$, say $\mathbb{R}^{3}_2$. Inside $\mathbb{R}^{3}_2$ the point $O_{33}
= (0, \sigma_3,\delta_3)$ is the saddle with the eigenvalues $\sigma_1 -
\rho_{13} \sigma_3$, $\sigma_2 - \rho_{23} \sigma_3$, $-\delta_3$,
and the point $O_{13} = (\sigma_1, 0,\delta_3)$ is the node with the eigenvalues
$- \sigma_1$, $\sigma_3 - \rho_{31} \sigma_1$, $-\delta_3$, 
The situation is structurally stable as well.

\underline{Persistence of $H(O_{31} \rightarrow O_{11}) = \Gamma_{31} \times
\tilde{O}_1$.}\\
This trajectory belongs to the three-dimensional invariant plane $x_2 = 0$, $y_2
= y_3 = 0$, say $\mathbb{R}^{3}_3$. Inside $\mathbb{R}^{3}_3$ (which is
invariant also for $p \neq 0$) the point
$O_{31} = (\sigma_3, 0,\delta_1)$ is the saddle with the eigenvalues $-
\sigma_3$, $\sigma_1 - \rho_{13} \sigma_3$, $-\delta_1$, 
and the point $O_{11}$ is the node with the eigenvalues 
$- \sigma_1$, $\sigma_3 - \rho_{31} \sigma_1$, $-\delta_1$. 
The heteroclinic trajectory joining $O_{31}$
and $O_{11}$ inside $\mathbb{R}^{3}_3$ is also structurally stable.

\underline{Persistence of $H(O_{13} \rightarrow O_{11}) = O_1 \times
\tilde{\Gamma}_{31}$.}\\
The heteroclinic trajectory belongs to the three-dimensional invariant plane 
$x_2 = x_3 = 0$, $y_2 = 0$, say $\mathbb{R}^{3}_4$. 
Inside it the point $O_{13} =(\sigma_1, 0,\delta_3)$ is the saddle 
with the eigenvalues $- \sigma_1$, $\delta_1 - \xi_{13}\delta_3$, $-\delta_3$, 
and  $O_{11}$ is the node, with the eigenvalues 
$-\sigma_1$, $-\delta_1$, $\delta_3 -\xi_{31}\delta_1$. 
Again, persistence of this trajectory is a structurally stable property.

Summarizing, we conclude:

\begin{theorem}\label{theorem_2}
Under the above conditions the heteroclinic network $\Gamma_0$ persists for
sufficiently small values of $|p|$.
\end{theorem}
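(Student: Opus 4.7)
The plan is to leverage the fact that the preceding discussion has already verified persistence for the four heteroclinic trajectories bounding the rectangle $\tilde R_0$, and to show that the same reasoning applies to the other fourteen connections of $\Gamma_0$ together with the nine equilibria. The key observation I would state first is that the coupling term $-p\,y_k\sum_s\eta_{ks}x_s$ in \eqref{syst_2} vanishes whenever $y_k=0$, while the equations for $\dot x_i$ do not involve $p$ at all. Consequently, every coordinate hyperplane $\{x_i=0\}$ and $\{y_k=0\}$, and every intersection of such hyperplanes, remains forward-invariant for all $p$. In particular, each heteroclinic trajectory listed in Table~\ref{table:heteroclinics} is contained in a three-dimensional invariant plane obtained by setting to zero the two coordinates absent from its product description, and this plane persists under the perturbation.

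Within each such three-dimensional invariant plane, the configuration is structurally stable. The nine equilibria $O_{ij}$ are hyperbolic for $p=0$, so by the implicit function theorem they persist as hyperbolic saddles $O_{ij}(p)$ whose stable and unstable manifolds depend smoothly on $p$ and retain their dimensions. In the three-dimensional plane hosting a given heteroclinic trajectory, the source equilibrium restricts to a saddle with a one-dimensional unstable manifold, while the target equilibrium restricts to a sink (all three of its eigenvalues in that plane being negative by the sign conditions \eqref{5}--\eqref{12}). A one-dimensional unstable manifold landing in the open basin of attraction of a sink is a robust property, preserved for all sufficiently small $|p|$ by continuous dependence of invariant manifolds on parameters. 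Thus each of the eighteen heteroclinic connections persists, and so does the full network $\Gamma_0$.

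The only real difficulty is organizational: one has to confirm, for each of the fourteen remaining trajectories, that the appropriate three-dimensional invariant plane contains a saddle with a one-dimensional unstable manifold together with a sink. This is immediate from the cyclic symmetry inherent in Table~\ref{table:heteroclinics}, which arranges the trajectories into three copies of the cycle $O_1\to O_2\to O_3\to O_1$ (one for each fixed $\tilde O_j$) and three copies of $\tilde O_1\to\tilde O_2\to\tilde O_3\to\tilde O_1$ (one for each fixed $O_i$). Up to relabeling of indices, the four cases already treated exhaust all structural possibilities, so no genuinely new computation is required.
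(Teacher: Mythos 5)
Your proposal is correct and follows essentially the same route as the paper: both arguments place each heteroclinic connection in a three-dimensional coordinate plane that stays invariant under the coupling, check that there the source restricts to a saddle with a one-dimensional unstable manifold and the target to a stable node, and invoke smooth/continuous dependence of invariant manifolds on $p$ to conclude that a saddle-to-sink connection is structurally stable. Your explicit remark on why the invariant hyperplanes survive the perturbation (the coupling term vanishes on $\{y_k=0\}$ and the $x$-equations are $p$-independent) and the appeal to cyclic symmetry to dispatch the remaining cases are just slightly more explicit versions of what the paper leaves implicit.
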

\item[b).] \textit{Persistence of the heteroclinic attractor}

To show the existence of a heteroclinic attractor at weak  coupling $|p|$, 
we prove the persistence of all rectangles that form $T_0$. 
As an example, we take the rectangle $\tilde{R_0}$;
for other rectangles the proof is similar. The proof is based on the
following facts,

\begin{enumerate}
	
\item[(i).] All points $O_{ij}(p)$ belong to the invariant four-dimensional
space $\mathbb{R}^{4}_0$ $(x_2 = y_2 = 0)$. At small values of $|p|$,
they are saddle points in
$\mathbb{R}_4$ with two-dimensional unstable manifolds.
Heteroclinic trajectories between them that exist for small $|p|$
due to Theorem~\ref{theorem_2}, also belong to $\mathbb{R}^{4}_0$. 
Denote them by
$\Gamma_{ij}(p)$, so that $\Gamma_{ij}(0) = \Gamma_{ij}$, $i, j \in \{1,2,3\}$.

\item[(ii).] At $p= 0$, the point $O_{11}$ is the stable node in 
$\mathbb{R}^{4}_0$  
with negative eigenvalues  $- \sigma_1$, 
$-\delta_1$, $\sigma_3 - \rho_{31}\delta_1$, 
$\delta_3 - \xi_{31}\delta_1$, all of them disjoint from zero. 
Hence, at small values of $|p|$ it
is still a sink, and there exists an absorbing region $U$ with the
maximal attractor $O_{11}$ inside it: see Fig~\ref{fig:traj_rectangle}.
\begin{figure}[h]
\centerline{\includegraphics[width=0.25\textwidth]{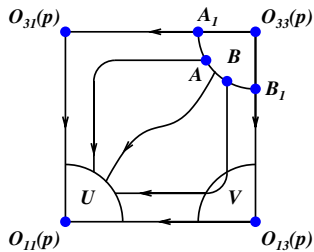}}
\caption{Trajectories on the rectangle.}
\label{fig:traj_rectangle}
\end{figure}
\item[(iii).] The local unstable manifold of $O_{33}(p)$ depends smoothly on
parameters, so for small values of $|p|$ it is $C^1$ close to
$W^{u}(O_{33}(0))$, the local unstable manifold for $p$=0. Therefore, if one
chooses an initial point $q$ on the interval $(A,B)$, 
Fig.~\ref{fig:traj_rectangle}, on $W^{u}(O_{33}(p))$ 
it will be close to a point $q_0 \in W^{u}(O_{33}(0))$. For the 
uncoupled system ($p = 0$) the trajectory passing through $q_0$ reaches $U$ 
in finite, bounded from above time. Thus, the trajectory of the system
(\ref{syst_1},\ref{syst_2}) passing through $q$ also
comes into $U$ in finite time if $|p|$ is small enough.

\item[(iv).] We show now that the representative point on the trajectory passing
through a point $\tilde{q} \in (A, A_1) \cup (B, B_1)$ comes eventually into
$U$. Without loss of generality we may assume that the point $B$ is so close to
$B_1$ that the trajectory passing through it intersects a small neighborhood $V$
of $O_{13}(p)$ at a point $\tilde{q}_1$ such that ${\rm dist}(\tilde{q}_1,
W^{u}(O_{13}(p)) <\delta$. We apply now the known results (see e.g., the
book~\cite{Sh}) to establish that 
${\rm dist}(\tilde{q}_2, W^{u}(O_{13}(p)) <\delta^{\nu}$
where $\tilde{q}_2$ is the point on the considered trajectory at the instant
when it leaves $V$ (see Fig.~\ref{fig:traj_in_V}) and $1 < \nu < \nu_{13}$, 
$\nu_{13}$ is the saddle index of $O_{13}(p)$.

\begin{figure}[h]
\centerline{\includegraphics[width=0.25\textwidth]{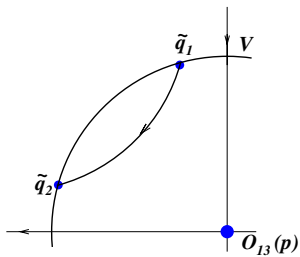}}
\caption{Trajectories in $V$.}
\label{fig:traj_in_V}
\end{figure}

This means that if $\delta$ is small enough, then the point $\tilde{q}_2$ 
is close to a point on the heteroclinic trajectory joining $O_{13}(p)$ 
and $O_{11}(p)$, therefore the trajectory  passing through $\tilde{q}_2$ 
reaches $U$ in finite time. 
The proof for points on $(A, A_1)$ is the same.

\item[(v).] Thus, we have proved that every trajectory passing though a point 
on $(A_1, B_1)$  enters $U$, and then tends to $O_{11}(p)$. The points on these
trajectories together with equilibrium points and heteroclinic trajectories
belonging to $\Gamma_0(p)$ form the desired rectangles $\tilde{R}_0(p)$.

The union of these rectangles forms the surface $T(p)$. The fact that $T(p)$ is
homeomorphic to $T_0$ follows directly from the construction above. It follows
\end{enumerate}
\end{enumerate}

\begin{theorem}\label{theorem3}
The attractor $T_0$ persists for small $|p|$.\\
\end{theorem}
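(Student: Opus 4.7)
My plan is to assemble the ingredients already laid out in items (i)--(v) of the preceding discussion into a clean inductive argument, rectangle by rectangle. By the symmetry of the nine rectangles that tile $T_0$, it suffices to treat one representative case, say $\tilde{R}_0(p)$ with corners $O_{11}(p)$, $O_{13}(p)$, $O_{31}(p)$, $O_{33}(p)$, and the argument for the remaining eight is verbatim modulo relabelling of coordinates.

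First, I would invoke Theorem~\ref{theorem_2} to obtain the perturbed skeleton $\Gamma_0(p)$, and I would localize the entire construction inside the invariant four-dimensional coordinate plane $\mathbb{R}^4_0=\{x_2=y_2=0\}$, which persists under the coupling because only $\dot y_k$ is modified and the factor $y_2$ still vanishes identically on the plane. Inspection of Table~\ref{table:eigenvalues} shows that inside $\mathbb{R}^4_0$ the corner $O_{33}(p)$ is a saddle with a two-dimensional unstable manifold, $O_{13}(p)$ and $O_{31}(p)$ are saddles with one-dimensional unstable manifolds, and $O_{11}(p)$ is a sink whose eigenvalues are bounded away from zero; hence for small $|p|$ it retains an absorbing neighbourhood $U$ by standard continuity of linearizations.

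Next I would fix a small transverse arc $(A,B)$ on $W^u_{\mathrm{loc}}(O_{33}(p))$, chosen so that its relative interior accumulates at the two edge heteroclinics $\Gamma_{31}(p)\times\{\tilde O_3\}$ and $\{O_3\}\times\tilde\Gamma_{31}(p)$ given by Theorem~\ref{theorem_2}. For an interior point $q\in (A_1,B_1)\subset(A,B)$ bounded away from the edges, the $p=0$ trajectory through the corresponding point of $W^u(O_{33}(0))$ enters $U$ in uniformly bounded time, so by continuous dependence on $p$ the perturbed trajectory also enters $U$ in finite time and thereby converges to $O_{11}(p)$. For a point $\tilde q$ near an edge, say near the edge leading to $O_{13}(p)$, I would apply the Shilnikov-type passage estimate recalled in the proof of Theorem~\ref{theorem_1}: the orbit enters a small neighbourhood $V$ of $O_{13}(p)$ at distance at most $\delta$ from $W^s(O_{13}(p))$ and exits at distance at most $C\delta^{\nu}$ from $W^u(O_{13}(p))$ (inside $\mathbb{R}^4_0$), with $1<\nu<\nu_{13}$ because $O_{13}(p)$ is a dissipative saddle. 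For $\delta$ sufficiently small the exit point is therefore $C^0$-close to the perturbed heteroclinic $H(O_{13}(p)\to O_{11}(p))$, and a second application of continuous dependence delivers the orbit to $U$. The edge leading to $O_{31}(p)$ is handled identically.

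The closure of the forward orbits of $(A_1,B_1)$ together with the four corners and the four bounding heteroclinics of $\Gamma_0(p)$ now fills out a topological disc $\tilde R_0(p)$ which is an invariant piece of $W^u(O_{33}(p))\cup\{O_{11}(p)\}$ and is $C^0$-close to $\tilde R_0$. Taking the union of the nine such rectangles yields the invariant surface $T(p)$; because adjacent rectangles share a common boundary heteroclinic from $\Gamma_0(p)$, the combinatorial gluing pattern is identical to that of $T_0$, so $T(p)$ is homeomorphic to $\mathbb{T}^2$. The attractor property for $T(p)$ then follows by a neighbourhood-by-neighbourhood repetition of the convergence argument of Theorem~\ref{theorem_1}, using the saddle indices which depend continuously on $p$ and therefore remain strictly greater than $1$ for small $|p|$. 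The main obstacle I expect is the edge case in the previous paragraph: making precise that the two-dimensional $W^u(O_{33}(p))$ continues to fibre over $(A_1,B_1)$ in a way whose boundary orbits track the perturbed skeleton rather than escaping $\mathbb{R}^4_0$, which is exactly where the dissipativity of the intermediate saddles and the invariance of $\mathbb{R}^4_0$ must be used in tandem.
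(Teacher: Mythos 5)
Your proposal is correct and follows essentially the same route as the paper's own argument: localization in the invariant subspace $\{x_2=y_2=0\}$, the corner classification (two-dimensional unstable manifold at $O_{33}(p)$, sink with absorbing neighbourhood $U$ at $O_{11}(p)$), continuous dependence for interior points of the arc on $W^u_{\mathrm{loc}}(O_{33}(p))$, the Shilnikov passage estimate with saddle index $\nu>1$ for points near the edges, and the gluing of the nine rectangles into $T(p)\cong\mathbb{T}^2$. The only (harmless) difference is that you fold the attractor property into the same argument, whereas the paper separates it out as Theorem~\ref{theorem4} via continuity of the saddle indices in $p$.
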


Since the saddle indices of all equilibria depend continuously on p, 
the following statement holds:
\begin{theorem}\label{theorem4}
Under the conditions of Theorem~\ref{theorem3}, 
the torus $T(p)$ remains to be an attractor for small $|p|$.
\end{theorem}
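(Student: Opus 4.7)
The plan is to upgrade Theorem~\ref{theorem_1} from $p=0$ to small $|p|$, exploiting the fact that the proof of Theorem~\ref{theorem_1} used only three ingredients: existence of the heteroclinic network on the torus, dissipativity of every saddle on it, and a uniform upper bound on the passage time along each heteroclinic connection away from the equilibria. The first ingredient is granted for small $|p|$ by Theorem~\ref{theorem_2} (persistence of $\Gamma_0$) together with Theorem~\ref{theorem3} (persistence of the toroidal surface $T(p)$). For the second ingredient I would invoke continuous dependence of the Jacobian of (\ref{syst_1})--(\ref{syst_2}) at each $O_{ij}(p)$ on the parameter $p$: the eigenvalues, and hence the saddle indices $\nu_{ij}(p)$, vary continuously with $p$. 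Since there are only nine saddles and $\nu_{ij}(0)>1$ for every one of them by assumption, there exist $p_0>0$ and $\nu_\ast>1$ such that
\begin{equation*}
\nu_{ij}(p)\;\geq\;\nu_\ast\;>\;1\qquad\text{for all }i,j\in\{1,2,3\},\ |p|<p_0.
\end{equation*}

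With uniform dissipativity in hand, I would repeat verbatim the iterative estimate from the proof of Theorem~\ref{theorem_1}. Pick a small neighbourhood $\mathcal{N}$ of $T(p)$ that decomposes into small neighbourhoods $V_{ij}(p)$ of the nine saddles $O_{ij}(p)$ joined by tubular neighbourhoods of the persisting heteroclinic arcs $\Gamma_{ij}(p)$ from Theorem~\ref{theorem_2}. For a trajectory entering some $V_{ij}(p)$ at distance $\varepsilon$ from $W^{s}(O_{ij}(p))$, the Shilnikov-type estimate used in Theorem~\ref{theorem_1} gives an exit point at distance at most $\varepsilon^{\nu'}$ from $W^{u}(O_{ij}(p))$, with $1<\nu'<\nu_\ast$. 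The trajectory then follows the neighbouring arc $\Gamma_{i'j'}(p)$ to the next saddle, arriving at distance at most $C\,\varepsilon^{\nu'}$ from $W^{s}(O_{i'j'}(p))$. Choosing $\varepsilon$ so that $C\varepsilon^{\nu'-1}<1/2$, the successive distances contract geometrically and the orbit accumulates on $T(p)$. Since this applies uniformly to each of the nine saddles and the eighteen heteroclinic legs, every orbit in $\mathcal{N}$ is attracted to $T(p)$.

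The principal obstacle is the uniform bookkeeping in the parameter $p$. Two quantities must be controlled simultaneously for all $|p|<p_0$ and all nine saddles: the constant $C$ multiplying $\varepsilon^{\nu'}$ in the local Shilnikov estimate, and the transit time along each $\Gamma_{ij}(p)$ outside the $V_{ij}(p)$. The former is uniform because the local linearisations and their associated normal forms depend continuously on $p$ and the set of saddles is finite; the latter is uniform because each arc $\Gamma_{ij}(p)$ is a compact piece of orbit lying at positive distance from the equilibria and depending continuously on $p$, so its travel time inherits a uniform upper bound from the compact-open topology. Once these two uniform estimates are in place, the argument of Theorem~\ref{theorem_1} carries over unchanged and delivers the desired attraction of $T(p)$.
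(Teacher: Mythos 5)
Your proposal is correct and follows essentially the same route as the paper, whose entire justification for Theorem~\ref{theorem4} is the one-line observation that the saddle indices depend continuously on $p$ (so dissipativity, and hence the argument of Theorem~\ref{theorem_1}, survives for small $|p|$). Your additional bookkeeping on the uniformity of the Shilnikov constant and of the transit times between saddle neighbourhoods is exactly the content the paper leaves implicit (cf.\ the remark in the proof of Theorem~\ref{theorem_1} that the passage time within each rectangle is bounded from above), so nothing is missing.
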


\section{Numerical investigation}
\label{sect:numerics}
\subsection{General aspects}
For numerical studies of the system (\ref{syst_1},\ref{syst_2})
we fix the coefficients of linear terms at
$\sigma_1$=1, $\sigma_2$=1.1, $\sigma_3$=0.9, $\delta_1$=2.2,
$\delta_2$=2.1, and $\delta_3$=1.9.
For the matrices, the values
\begin{eqnarray}
\displaystyle
    \rho_{21}=0.6\,\frac{\sigma_{2}}{\sigma_{1}},\;&
      \rho_{31}=1.65\displaystyle\frac{\sigma_{3}}{\sigma_{1}},\;&
      \rho_{32}=0.7\displaystyle\frac{\sigma_{3}}{\sigma_{2}},\nonumber\\
      \rho_{12}=1.55\frac{\sigma_{1}}{\sigma_{2}},\;&
      \rho_{13}=0.62\frac{\sigma_{1}}{\sigma_{3}},\:&
      \rho_{23}=1.45\frac{\sigma_{2}}{\sigma_{3}}\nonumber
\end{eqnarray}
and
\begin{eqnarray}
    \xi_{21}=0.6\frac{\delta_{2}}{\delta_{1}},\; &
    \xi_{31}=1.65\displaystyle\frac{\delta_{3}}{\delta_{1}},\;&
    \xi_{32}=0.7\frac{\delta_{3}}{\delta_{2}},\,\nonumber\\
    \xi_{12}=1.55\frac{\delta_{1}}{\delta_{2}},\;&
    \xi_{13}=0.62\displaystyle\frac{\delta_{1}}{\delta_{3}},\;&
    \xi_{23}=1.45\frac{\delta_{2}}{\delta_{3}}\nonumber
\end{eqnarray}
are adopted. Finally, the coefficients $\eta_{ij}$ at mixed terms
obey $\eta_{ij}=i+0.2j^2$.       

In nine states of equilibrium of Eq.(\ref{syst_1},\ref{syst_2})
exactly two of three $x_i$ and  two of three $y_i$ vanish.
At the above values of coefficients, and at vanishing or sufficiently
weak coupling $p$, all these equilibria are saddles
with two-dimensional unstable manifolds. At $p$=0 the $x$- and $y$-subsystems
decouple; each of them possesses three saddles with one-dimensional unstable
manifold and the heteroclinic contour formed by the separatrices that
connect those saddles. The above choice of parameter values ensures 
(in terms of the corresponding saddle indices) that each contour 
is attracting in the partial subspace of the respective subsystem.  
In accordance with results of Sect.~\ref{sect:rigorous}, 
an attractor in the joint phase space at small values of $|p|$ 
should be a persistent two-dimensional torus $T_0$ 
with the heteroclinic network $\Gamma_0$ upon it 
(see Fig.~\ref{fig:torus_scheme}).

Peculiarities of dynamics near attracting heteroclinic contours
result in long epochs when a trajectory hovers in vicinities of saddle points.
Duration of these repetitive epochs grows exponentially, and a perfect
numerical integrator will, instead of delivering information about
the whole attracting state, exhaust the time resources in ever longer
passages near the unstable equilibrium. 
It is known that inevitable numerical 
inaccuracies (at least, at the roundoff level) and/or 
introduction of explicit noise  are able to 
kick the trajectories from the vicinities of the saddles; as a result
of these imperfections, a system with an attracting heteroclinic contour
displays virtually periodic behavior, with the ``period'' proportional to the
logarithm of the imperfection amplitude. Below, we introduce this
imperfection in the explicit controllable way; this should allow us to infer
the asymptotic properties of the unperturbed dynamics on $T_0$ 
from observable properties of perturbed numerical evolution.

Equations  (\ref{syst_1},\ref{syst_2}) possess invariant hyperplanes: $x_i$=0 
or $y_i$=0 $\forall i$. We impose impenetrable barriers parallel to these 
hyperplanes: none of the coordinates is allowed to vanish. 
In this way, we replace
the continuous dynamics  by a piecewise-continuous one:
after every timestep of integration of  (\ref{syst_1},\ref{syst_2}), 
the ``calibrations''
\begin{equation}
\label{epsilon}
x_i\to \max (x_i,\varepsilon),\;\;
y_i\to \max (y_i,\varepsilon),\;\;\;i=1,2,3
\end{equation}
are performed, with fixed small $\varepsilon>0$.
In this way, $\varepsilon$
becomes a governing parameter of the dynamical system.

\subsection{Measuring the instability rates}
A hallmark of a motion along the invariant two-dimensional surface are two
vanishing Lyapunov exponents. 
The top panel of Fig.~\ref{fig:lyap} shows in the decreasing order 
all six Lyapunov exponents, evaluated in the standard way 
(cf.\eqref{lyap_length}) for the trajectory of the system 
(\ref{syst_1},\ref{syst_2},\ref{epsilon}) at a relatively weak coupling
$p$=0.01 for $t=5\times10^5$ and $\varepsilon$ ranging 
from $10^{-3}$ to $10^{-36}$.
\begin{figure}
\centerline{\includegraphics[width=0.3\textwidth]{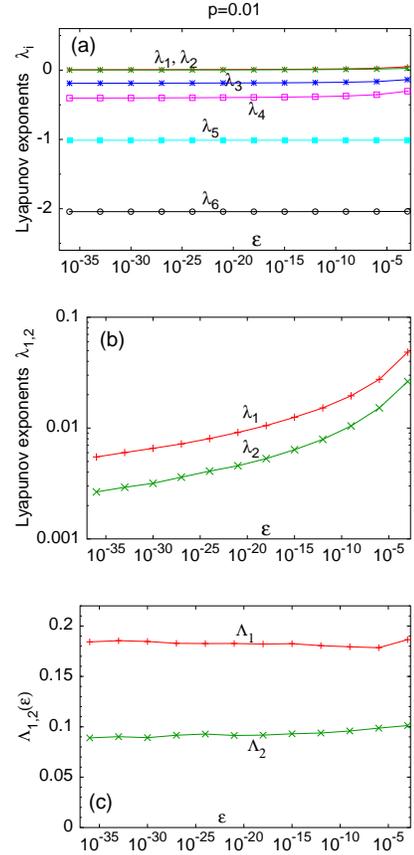}}
\caption{\label{fig:lyap}
Characteristics of instability for 
Eqs (\ref{syst_1},\ref{syst_2},\ref{epsilon}) at $p$=0.01.
(a),(b) conventional Lyapunov exponents; 
(c): length-related Lyapunov exponents}
\end{figure}
Indeed, at this level of graphical resolution we get an impression that 
at sufficiently small values of $\varepsilon$ 
the Lyapunov exponents tend to constant values, and their set with four 
negative and two vanishing values of $\lambda_i$ characterizes a motion 
along the attracting two-dimensional torus. 
However, a proper magnification of the region
adjacent to zero (middle panel of Fig.~\ref{fig:lyap}) discloses 
that the saturation for the two largest exponents is
deceptive: both exponents approach zero rather slowly, 
$\lambda_{1,2}(\varepsilon)\sim\ -1/\log \varepsilon$. 
At finite values of $\varepsilon$
the estimates $\lambda_{1,2}$ stay positive, indicating presence of two
modes of instability, albeit rather weak. 

Computation of conventional Lyapunov exponents $\lambda_i$ in accordance 
with (\ref{lyap_time}) is ambiguous for the two largest ones. 
We are much better served if, instead, we use the length-related
characteristics $\Lambda_i$, defined by Eq.(\ref{lyap_length})\footnote{For 
widespread situations with non-zero average speed of motion along the attractor 
in the phase space,  characteristics (\ref{lyap_time}) and (\ref{lyap_length})
are, up to a constant factor, equivalent. This does not hold for 
(\ref{syst_1},\ref{syst_2}) and similar systems where the average speed 
tends to zero in the limit $t\to\infty$.}.
As visualized in Fig.~\ref{fig:lyap}(c), estimates of both length-related 
exponents  $\Lambda_{1,2}$ stay nearly constant in the considered range 
of $\varepsilon$. This means that 
$\|\tilde{\mathbf{x}}_{1,2}(t)\| \sim \exp \big(\Lambda_{1,2}L(t)\big)$.

Similarly to the jargon based on conventional Lyapunov exponents where presence
of two and more positive LE is called ``hyperchaos'', here we can speak of
``weak hyperchaos''.
Of two positive exponents shown in Fig.~\ref{fig:lyap}c, the larger one
stems from the $y$-subsystem of the equations (\ref{syst_1},\ref{syst_2}) 
whereas the smaller one is related to the intrinsic dynamics 
of the $x$-subsystem. 

\subsection*{Choosing the appropriate length}
Before presenting reaction of $\Lambda_{1,2}$ to variation of the
coupling amplitude $p$, an important technical aspect should be discussed.
The flow  (\ref{syst_1},\ref{syst_2}) is a skew system: the variables $x_i$ act
upon the set $\{y_i\}$ without reverse action. By construction,
internal dynamics in the subspace $x_i$ ($i=1,2,3$) is independent of
the value of $p$. Three of the six Lyapunov exponents 
characterize evolution of perturbations 
within this subspace  and do not depend on $p$;
in Fig.~\ref{fig:lyap} these are, along with $\lambda_2$, 
the negative exponents $\lambda_3$ and $\lambda_5$.
Being restricted to the internal dynamics of the subsystem $x$,
the re-parameterized exponent $\Lambda_2$ should stay $p$-independent
as well. However, the \textit{total} length of the phase trajectory includes the
coordinates $y_i$ and thereby depends on $p$; hence, it cannot be used
in the evaluation of  $\Lambda_2$, and should be replaced there by the length
$L_x(t)$ of the projection onto the $x$-subspace. 
For a comparison of the growth rates of two instability modes,
we cannot express them in terms of \textit{different} lengths, hence 
below we substitute $L(t)$ in Eq.(\ref{lyap_length}) by $L_x(t)$ \textit{both}
for $\Lambda_1$ and $\Lambda_2$. 

Notably, for $\Lambda_1$ this procedure is not especially accurate 
at vanishing and very small values of $p$: as mentioned above, at $p=0$, 
$\Lambda_1$ characterizes the growth of instability in the \textit{isolated} 
subsystem $y$ where a normalization with respect to $L_y$ would be 
an obvious choice. 
At very small $p$, the influence of the coupling subsystem $x$ 
is weak, and an evaluation in terms of $L_x$ may distort the whole picture. 
This is confirmed in the left panel of  Fig.~\ref{fig_three_lengths}: 
there, the estimate of  $\Lambda_1$ based on $L_x$ approaches 
the horizontal asymptote at small 
$\varepsilon$ distinctly slower than analogous estimates based on
$L_y$ or the total length $L$. The length values have been determined 
in accordance with the following protocol: for all values of $\varepsilon$ 
(and further below, of the parameter $p$) the trajectory starts from the same 
initial conditions, and after a (discarded) transient of $10^3$ time units 
is further integrated for $5\times 10^5$ time units, producing a phase
curve in the 6-dimensional phase space. 
For this curve, we calculate its total Euclidean length $L$ 
as well as the lengths of its projections onto the three-dimensional 
$x$- and $y$-subspaces:  respectively, $L_x$ and $L_y$. 
\begin{figure}[h]
\centering{\includegraphics[width=0.49\textwidth]{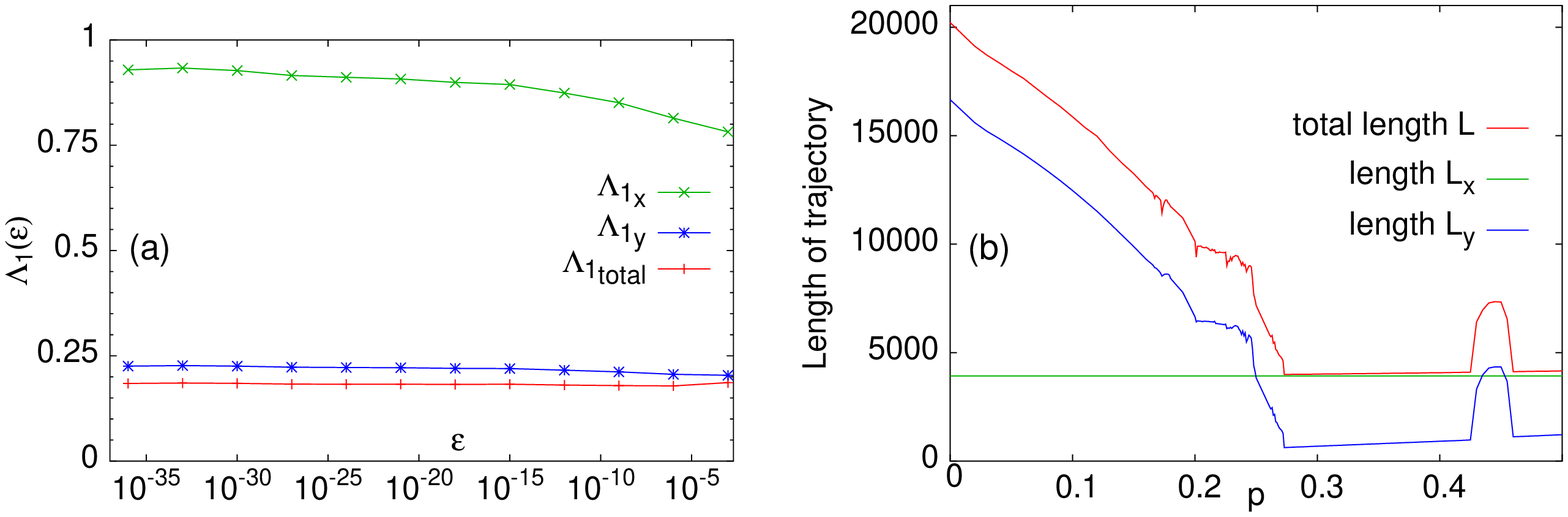}}
\caption{\label{fig_three_lengths}
Characteristic lengths for phase trajectories of  (\ref{syst_1},\ref{syst_2}).
Green, blue and red  curves correspond to measurements based on,
respectively, $L_x$, $L_y$ and the total length $L$. 
Integration interval: $t=5\times 10^5$.
(a): Instability rate (\ref{lyap_length}) at $p=0.01$ 
and variable $\bm\varepsilon$ in terms of different projections of trajectory.
(b): Dependence of lengths on the coupling strength $p$ 
at $\varepsilon$=$10^{-27}$.}
\end{figure}

\subsection{Variation of the coupling strength}

Variation of the parameter $p$ affects the dynamics of the system 
(\ref{syst_1},\ref{syst_2},\ref{epsilon}), 
both quantitatively and qualitatively.
In the expression for the growth rate (\ref{lyap_length}),
 this concerns $y$-related terms 
in the length $L$ of the reference orbit and in the norm
$\|\tilde{\mathbf{x}}(t)\|$ of the perturbation. 
We start with the influence of $p$ upon $L$
under fixed observation time $t$.
When, at constant $\varepsilon\neq 0$, the coupling $p$ is increased, 
repulsion near the saddles in the $y$-subspace weakens
(quantitatively, this can be read off the respective saddle indices), 
hence the system stays longer in the neighborhoods of those saddles, 
and the average speed of motion across the $y$-subspace lowers. 
As a result, the total length of the reference orbit, 
as well as the length of its $y$-projection, 
decrease. This effect is illustrated in the right panel of 
Fig.~\ref{fig_three_lengths}.  
At low values of $p$, $y$-related components dominate in the total length; 
at larger $p$, in contrast, 
$L_y$ becomes  much shorter than $p$-independent $L_x$, 
so that the difference  between the total  length and 
the length of $x$-projection becomes virtually negligible.
 
For estimates of the mean growth rates  $\Lambda_{1,2}$ 
of perturbations 
we use the $p$-independent length $L_x(t)$, so that the entire effect is 
due to changes in the value of  $\|\tilde{\mathbf{x}}(t)\|$. Recall that
the rate $\Lambda_2$ characterizes the internal dynamics in the $x$-subsystem
and is thereby insensitive to variations of $p$. In contrast,
the value of $\Lambda_1$ (determined at $p$=0 by dynamics in 
the subspace $y$), varies when $p$ is changed. 
This is illustrated in Fig.~\ref{fig_p_dependence}. 
\begin{figure}
\centering{\includegraphics[width=0.4\textwidth]{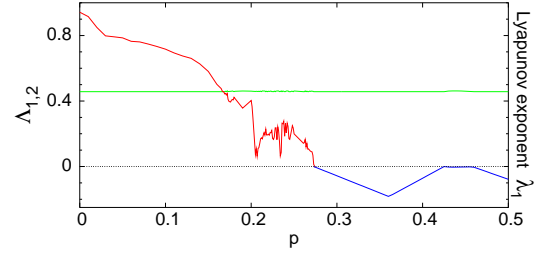}}
\caption{\label{fig_p_dependence}
Red and green solid curves, respectively: 
length-related exponents $\Lambda_{1}$ and $\Lambda_{2}$.
Blue solid curve: conventional (time-related) negative Lyapunov exponent.
The value of  $\varepsilon$ is fixed at $10^{-27}$.}
\end{figure}

As we see in the plot, increase of $p$ weakens this instability mode:
the value of $\Lambda_1$ nearly monotonically decays until, 
at $p\approx$0.175 it becomes smaller than the exponent 
$\Lambda_{2}$. Further growth of $p$ results in a jagged non-monotonic 
pattern: probably, an indicator of internal transitions in this weakly chaotic
state. Finally, $\Lambda_1$ changes sign close to $p=0.27$. 
There, this mode of slowly growing instability is replaced by the exponentially 
decaying perturbations, characterized by negative conventional Lyapunov 
exponent. Only one weakly chaotic component,
corresponding to evolution of $x$-variables persists; 
in the projections of  $y$-variables, there is almost no dynamics: 
practically all the time they hover close to the saddle points,
and the total length of the trajectory $L$ nearly coincides 
with the length $L_x$ of the $x$-projection.
In this way, the weakly chaotic dynamics close to the torus $T_0$ 
of the whole system is replaced by the ``simpler'' weakly chaotic dynamics 
near the heteroclinic contour of the master $x$-subsystem.
Around $p\approx 0.45$ there is a short range of $p$ where the negative
Lyapunov exponent nearly vanishes again; 
there, dynamics of the variables $y$ consists of short jerks,
and the length of projection onto subspace $y$ becomes comparable 
with $x$-projection (cf. Fig. \ref{fig_three_lengths}).

\subsection{Transformations in the phase space}
Changes in the instability rates, imposed by variation of $p$, are
reflected in changes of the phase portraits. In a reasonably broad
parameter range  $0\leq p\leq 0.27$, these changes  appear to
be mostly quantitative:
the shape of phase trajectories is qualitatively persistent.
Exemplary evolution of individual
variables at two values of $p$ from this range is presented 
in Fig.~\ref{fig_individual}. 
All variables display more or less ordered patterns. 
Recall that due to finite value of $\varepsilon$ 
the trajectories stay at a bounded distance 
from the invariant planes. 
For this reason, the times of residence in vicinities of the saddles, instead
of forming the growing geometric progression (as would be the case at 
$\varepsilon=0$), weakly oscillate near the constant values. Remarkably, these
values for two subsystems are different: there is no simple phase locking.
In both subsystems we observe the typical winnerless competition.
All three variables $x$ (top row) as well as all three $y$ variables 
(lower rows) oscillate in turn: while one of them traverses the high plateau, 
two other ones nearly vanish. 
For each variable in the $y$-system the plateaus consist of segments
with three different heights: 
$y_i=\delta_i-\,p\sum_j\eta_{ij}x_j$ where $x_j$ are the coordinates
of the master system at its saddle points. 
Differences between the plateau heights are proportional to $p$; 
they are weaker expressed in the left column of the plot, but are
well visible (especially for the variable $y_2$) in the right column. 
At $p=0.22$ the temporal pattern is apparently weakly disordered;
furthermore, the plateaus of $y_2$ are distinctly wider than the plateaus
of other driven variables: 
epochs of activity of $y_3$ and, especially, of
$y_1$ turn into the sharp spikes separated by uneven intervals.

\onecolumngrid

\begin{figure}[h]
\centering{\includegraphics[width=0.72\textwidth]{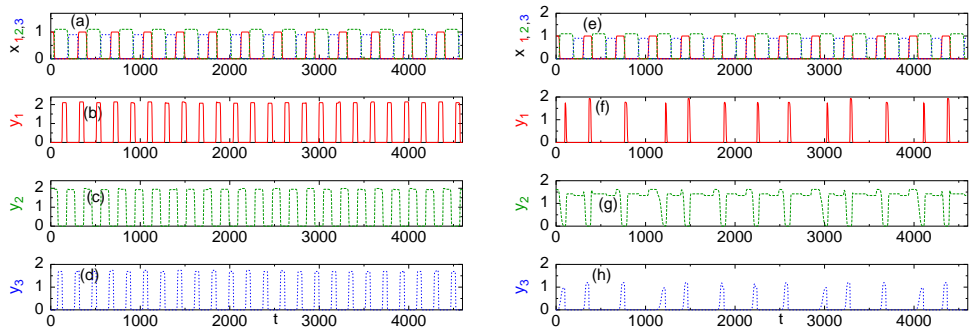}}
\caption{\label{fig_individual}
Dynamics on the non-smooth torus:
Temporal evolution of the system (\ref{syst_1},\ref{syst_2},\ref{epsilon})
at $\varepsilon=10^{-18}$. Left column: $p$=0.05; right column: $p$=0.22.
Top row: variables of the master system. Lower rows: variables of the
forced $y$ system.}
\end{figure}
\twocolumngrid

\begin{figure}
\centering{\includegraphics[width=0.47\textwidth]{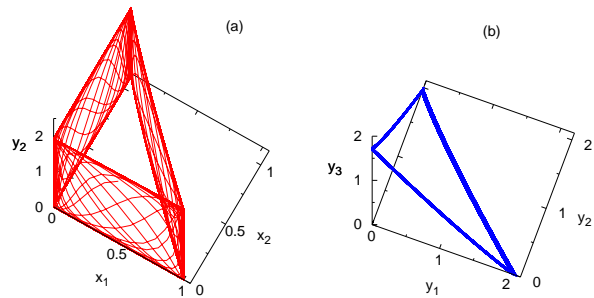}}
\caption{\label{fig_portraits}
Projections of the phase portrait of the system
(\ref{syst_1},\ref{syst_2},\ref{epsilon}) at $p=0.05$, $\varepsilon=10^{-18}$.}
\end{figure}

Characteristic projections of the phase portrait onto three-dimensional
subspaces for this oscillatory state are shown in Fig.~\ref{fig_portraits}. 
In the master $x$-subsystem (not shown) we observe just the attracting
heteroclinic contour. A plot with two coordinates from the master system
and one coordinate from the driven $y$-subsystem (left panel 
of Fig.~\ref{fig_portraits}) has the shape of a right triangular prism;
equilibria are projected onto vertices whereas the unstable manifolds
run along the edges. At finite $\varepsilon$ the attracting orbits escape 
the vertices along the faces of the prism.

The projection with coordinates entirely from the driven subsystem
(right panel of Fig. \ref{fig_portraits}) has a triangular shape.
Recall that each vertex of the triangle results from projecting of
three different (and well separated in terms of coordinates $x_i$) 
points of equilibrium from the $x$-system. 
In terms of the coordinates $y_i$, the distances between these three
projections are proportional to $p$, and a closer look in Fig.~\ref{fig_small}
shows that the triangle possesses a ``width'': its vertices 
(and, correspondingly the whole phase portrait) split into three components.
Deceptively close on the $y$-projection, these components are 
macroscopically separated in terms of the $x$-coordinates.

\begin{figure}
\centering{\includegraphics[width=0.3\textwidth]{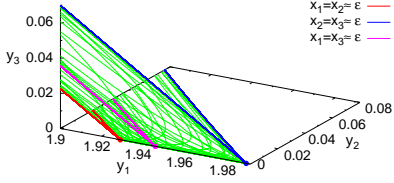}}
\caption{\label{fig_small}
Blowup of the phase space region near the $y_2$-axis. Filled circles 
and thicker curves: points of equilibrium and their separatrices.
Parameter values like in Fig.~\ref{fig_portraits}.}.
\end{figure}

At still higher values of $p$ the picture changes drastically: in the range
$0.28\leq p\leq 0.42$ only the variable $y_2$ survives in the driven system 
whereas  the variables $y_1$ and $y_3$ completely decay (numerically
both of them attain the value of $\varepsilon$).
An example of evolution of $y_2$ is presented in Fig.~\ref{fig_jerks}a:
it consists of horizontal plateaus connected by segments of rapid transitions. 
Comparison with dynamics of variables of the master system in the 
panel~\ref{fig_jerks}b
shows that each plateau corresponds to the epoch of activity for one of the 
master variables.
At 0.43$\leq p\leq$ 0.45  a further regime is observed 
(Fig.~\ref{fig_jerks}c,d) in which only $y_3$ decays 
whereas $y_1$ and $y_2$ alternate in activity. 
Finally, beyond $p=0.45$ the driven system returns to the state with 
only one active variable that jumps between three plateaus 
(Fig.~\ref{fig_jerks}e,f); 
this time it is the variable $y_1$ whereas $y_2$ and $y_3$ decay.
The reasons for this profound change in dynamics can be understood
from the plot of parameter dependence of the overall saddle indices
of the heteroclinic contours (see Sect.\ref{sect:rigorous}): 
the products of saddle indices 
over all saddles participating in the contour~\cite{Afr_MIR_Varona_2004}. 

%\newpage

\onecolumngrid

\begin{figure}[h]
\centering{\includegraphics[width=0.9\textwidth]{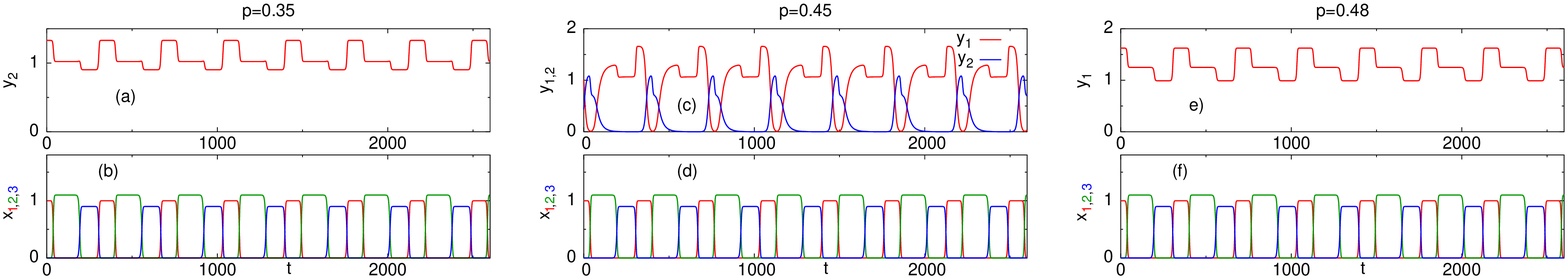}}
\caption{\label{fig_jerks}
Dynamics after the breakup of the non-smooth torus:
Temporal evolution of the system (\ref{syst_1},\ref{syst_2},\ref{epsilon})
at $\varepsilon=10^{-18}$.\protect\  
Left column: $p$=0.35; middle column: $p$=0.45; 
right column: $p$=0.48.\protect\ 
Top row: non-decaying variables of the driven system:
(a) $y_2$;  (c) $y_{1,2}$; (e)  $y_1$.\protect\  
Bottom row: variables of the master system: $x_1,x_2,x_3$.}
\end{figure}
%\hrule

\twocolumngrid

The non-smooth torus at small $|p|$ is built from several heteroclinic 
contours: when the master $x$ is frozen at one of its saddle equilibria,
the driven system has three saddle points 
whose one-dimensional unstable manifolds form a contour. 
Altogether there are three such contours, and
for each of them the overall saddle index should be checked separately, 
taking into account only the eigenvalues pertaining to the $y$-subspace.
Recall: a heteroclinic contour is attracting if its overall index exceeds 1.

Fig.~\ref{fig_indices} shows dependence of all three overall indices 
on the coupling strength $p$. For each curve, an initial weak decrease 
is superseded by subsequent growth: attraction to the contours becomes 
stronger. Remarkably, the saddle indices of \textit{separate} saddles may 
decrease and fall below 1 (not shown in Fig.~\ref{fig_indices}); 
what matters, however,  are not the separate indices 
but their overall product -- and it grows!
Furthermore, one by one, the values of the overall products diverge: 
one of the respective positive eigenvalues 
becomes small and finally vanishes:
the equilibrium loses instability 
and turns from the saddle into the stable node.

\begin{figure}
\centering{\includegraphics[width=0.35\textwidth]{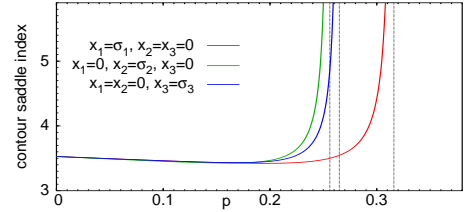}}
\caption{\label{fig_indices}
Overall saddle indices, composed of eigenvalues, leading
in the $y$-subspace.
Coloring denotes location of the %corresponding 
saddle point in the $x$-subspace:
$(x_1$=$\sigma_1,\,x_2$=$x_3$=0)  for the red curve,
$(x_1$=0, $x_2$=$\sigma_2,\,x_3$=0)  for the green curve, and
$(x_1$=$x_2$=0, $x_3$=$\sigma_3)$  for the blue curve. 
Dotted vertical lines:
stabilization (via transcritical bifurcations) of respective saddles.
Note: growth of \textit{overall} indices does not exclude the decrease
of indices of \textit{separate} saddles (not shown here).}
\end{figure}

For an equilibrium lying on the coordinate axis of the $y$-subsystem,
stabilization is a result of the transcritical bifurcation. 
With the master system frozen at one of its saddles, 
the driven subsystem, along with a set of 
three such ``axial'' equilibria,  possesses three further steady states, 
one in each of the invariant planes $y_i$=0, $i=1,2,3$.
At $p$=0 all three ``in-plane'' states have, beside zero, 
a positive and a negative coordinate $y_i$; thereby, they lie in the 
``non-physical'' part of the phase space. 
Two of these states are asymptotically stable in the subspace $y$ whereas 
the third one is a saddle with one negative and two positive Jacobian 
eigenvalues. Since the coordinates of these equilibria are linear
functions of the coupling $p$, some
of them move into the positive octant when $p$ is increased. 
Crossing on their way the pertaining
coordinate axis, they collide with the corresponding ``axial'' equilibrium.
In the course of this transcritical bifurcation, the equilibria exchange 
stability, and the axial one becomes stable. 
Explicit expressions for the bifurcation parameter values are too
lengthy to be quoted here.

Stabilization of the former saddle(s) changes the dynamics of $y$ 
at frozen $x$: now there is a simple attractor, the subsystem 
eventually converges to it and stays there forever. 
``Unfreezing'' $x$ leads to rapid switches in the dynamics of 
the driven subsystem: arrival of the master system at its next saddle point 
implies for the driven one a new structure of the phase space,
where there may still be three saddle points and a heteroclinic contour, 
or, instead, a steady ``axial'' attractor at which the driven system resides 
until the next switch. In the latter case two of the coordinates $y_i$ decay 
whereas the third one assumes the equilibrium value. 
In the course of the further increase of $p$, 
stabilization of saddle states, step by step,  occurs in all three ``frozen'' 
subspaces, and finally heteroclinic dynamics in the driven system dies out.
In the situations of the left and right columns of Fig.\ref{fig_jerks},
the stable equilibria before/after the switch lie on the same coordinate axis, 
hence the evolution of the driven subsystem becomes 
effectively one-dimensional; in the middle column,  the driven
subsystem jumps between equilibria on two axes, 
keeping the third coordinate negligible.

The states in the left and right columns of Fig.\ref{fig_jerks} illustrate
replacement of the winnerless competition in the slave subsystem $y$ 
by the quasi-steady ``winner-take-all'' situation: 
As long as the master $x$ remains in the
nearly static configuration at one of its saddle-points, the
slave subsystem synchronizes with it, becoming static as well.
Nontrivial dynamics emerges in the situation of ``nimble master, 
lazy slave'' when the typical time of heteroclinic switching in the master 
subsystem is smaller than the time of relaxation
to the equilibrium in the slave subsystem; details of this dynamics
will be reported elsewhere.

Alteration of quasi-steady states also explains the practically 
piecewise-linear 
dependence on $p$ of the negative Lyapunov exponent (blue solid curve in 
Fig.~\ref{fig_p_dependence} above). In those states, the driven subsystem jumps 
between the (emerging) stable equilibria, and the Lyapunov exponent is just the
weighted sum of the least negative Jacobian eigenvalues of these equilibria;
the weights are the normalized lengths of the corresponding plateaus.
Since all eigenvalues are linear functions of $p$, their weighted sum is it 
as well.

\section{Discussion}

In this paper, we have discussed coordination among
coupled heteroclinic networks, whose dynamics
mimics sequential switching of metastable information units.
Coupled networks of this kind exist 
on different levels of brain elements hierarchy.
The hierarchy itself results from  complex functional interactions, 
residing between the poles of segregation and integration 
tendencies for networks that perform joint  
specific cognitive and/or behavioral tasks.
In particular, we have suggested the dynamical mechanism 
of low-dimensional coordination that is
related to the general information processing in the brain 
sequential units~\cite{Baldassano_et_al_2017}.
The key phenomenon of this coordination is 
entrainment of localized units  in multimodal brain activity~\cite{RAV_2010}. 

For the upper level of network hierarchy 
we have proposed a low-dimensional mathematical model 
of the brain-to brain interaction. 
The model belongs to the class of generalized Lotka-Volterra systems
that, when decoupled, feature rhythmic activity.
Regimes, observed in the phase space 
in the case of the simplest master-slave configuration, 
can be viewed as mathematical images of the corresponding cognitive processes. 
Under sufficiently weak coupling, the attractor is a non-smooth 
two-dimensional torus that contains equilibrium points 
of the saddle type and is composed of heteroclinic orbits joining those points. 
Instability of all trajectories in the basin of the attractor is confirmed 
by presence  of two positive length-related Lyapunov exponents. 
Typical pattern of behavior  on the attractor is successive switching 
between the saddles  along different heteroclinic channels~\cite{AGR}. 
Under stronger coupling the system undergoes a bifurcation 
related to the breakup of the invariant torus. 
After the breakup, the winnerless competition dynamics
in the slave subsystem is replaced by the winner-take-all quasistatics.
This restricts the number of options in the slave subsystem and makes
cooperation between the master and the slave more rigid. 

\textit{Proper account of fluctuations.} \  
In numerical studies we have substituted the fluctuating terms in the equations
(e.g., explicit multiplicative noise) by the formal construction that keeps
trajectories from coming too close to the invariant hyperplanes. 
Replacement is justified by the fact that the sole role of fluctuations, 
regardless of their explicit shape, is to kick the system 
out of the vicinities of the equilibria
where, otherwise, the system would spend the overwhelming 
proportion of its time.
Therefore, we expect qualitatively the same results if, instead, the
stochastic version of Eq.\eqref{LV_two} is simulated.

\textit{A few words about bidirectional coupling.} \ 
Our analysis has been restricted to unilateral coupling;
in the dynamical system \eqref{LV_two} this corresponds to vanishing 
parameter $q$, responsible for the reverse influence of the participant $Y$ 
upon the participant $X$. Part of our results can be extended to the case 
of bidirectional interaction; 
this refers, in particular, to the existence at weak coupling rates
of the attracting non-smooth torus  with the heteroclinic network. 
The proof in  Sect.~\ref{sect:rigorous} is based 
on presence of the torus in the case of decoupled subsystems 
and on the continuity arguments for sufficiently weak unilateral coupling $p$. 
Similar continuity arguments ensure persistence of this attractor 
under sufficiently small values of reverse coupling $q$ as well. 
Accordingly, the system with weak two-way coupling should also feature the 
``toroidal'' winnerless competition, with heteroclinic channels 
between the saddles  formed not along one-dimensional separatrices 
but along two-dimensional manifolds. Substantial increase of either (or both) 
of the coupling coefficients $p$ and $q$ enforces the breakup 
of the non-smooth torus; along with the mechanism described above 
(partial regain of stability by the saddle points  in one of the subsystems), 
other scenarios can develop as well, 
e.g., loss of attraction by some of the heteroclinic contours 
and subsequent ``smoothing'' of respective corners of the attractor. 
Details of these effects will be reported elsewhere.

\textit{Master-slave case: are we slaves of our memories?}
As mentioned in the Introduction, unidirectional configuration of
coupling can model the influence of episodic memory in the past
upon memory dynamics in the future. In this respect, the winner-takes-all
behavior described in the end of the preceding section might be 
of interest for certain kinds of mental disorder where attention of a patient 
is rigidly fixed at a few past events: the past memory (master) 
cyclically switches between several episodes with long stay at each of them; 
during these stays, the current memory  (driven subsystem) stays frozen, 
but as soon as the past episode changes, the equilibrium of the current memory 
ceases to exist,
and the memory abruptly moves on to its new tentative attractor.

\textit{Brain-to-Brain information generation.}
Episodic memory for real life
involves the  orchestration of multiple time scales dynamical processes,
including  hierarchical chunking  and multimodal binding of events.  To
concentrate at the core of the phenomenon of episodic entrainment, 
we have restricted our treatment to the simplest approximation.
We supposed that the characteristic time of episodes forming i.e., chunking 
$t_{\rm ch}$  is much shorter than the characteristic time $t_{\rm ep}$
of sequential switching between episodes. 
Based on the generalized hierarchical model of episodic 
memory~\cite{Varona_Rabinovich_2016}, 
entrainment with arbitrary ratio 
$t_{\rm ch}\ll t_{\rm ep}$ can be considered. 
Interesting new dynamics is expected within this modeling framework. 
In particular, confusion or entanglement of the events from different 
episodes in the entrainment memory can occur. 
The dynamical origin of such memory errors can be the overlap of weakly 
chaotic time series representing different episodes in the sequential 
entrainment process (about the neurophysiological origin of the errors 
and distortion in the episodic memory see, for  example, \cite{Schacter_2012}). 
To estimate the level of information generated in the error sequence 
of episodes, the technics suggested in~\cite{Rabinovich_etal_2012} 
can be employed.

\begin{acknowledgments}
Research in Sect.\ref{sect:rigorous} was carried out with the financial
support for V.A. of the Russian Science Foundation (Project No 16-42-0143).
Numerical studies of M.Z.  in Sect.\ref{sect:numerics}
were supported by the RSF  (Project No. 17-12-01534).
\end{acknowledgments}

\end{document}